\newenvironment{@abssec}[1]{%
     \if@twocolumn
       \section*{#1}%
     \else
       \vspace{.05in}\footnotesize
       \parindent .2in
         {\upshape\bfseries #1. }\ignorespaces
     \fi}
     {\if@twocolumn\else\par\vspace{.1in}\fi}
\newenvironment{keywords}{\begin{@abssec}{\keywordsname}}{\end{@abssec}}
\newenvironment{AMS}{\begin{@abssec}{\AMSname}}{\end{@abssec}}
\definecolor{darkred}{rgb}{.7,0,0}
\definecolor{darkgreen}{rgb}{.15,.55,0}
\definecolor{darkblue}{rgb}{0,0,0.7}
\newcommand{\A}{\mathbf{A}}
\newcommand{\B}{\mathbf{B}}
\newcommand{\bC}{\mathbf{C}}
\newcommand{\D}{\mathbf{D}}
\newcommand{\J}{\mathbf{J}}
\newcommand{\R}{\mathbb{R}}
\newcommand{\Ex}{\mathbb{E}}
\newcommand{\de}{\text{d}}
\newcommand{\Tr}{\text{Tr}\,}
\newcommand{\etaref}{\eta^0} 
\newcommand\keywordsname{Key words}
\newcommand\AMSname{AMS subject classifications}
\newtheorem{theorem}{Theorem}[section]
\newtheorem{lemma}{Lemma}[section]
\newtheorem{remark}{Remark}[section]
\newtheorem{assumption}{Assumption}[section]
\title{\sc Transport map unadjusted Langevin algorithms: learning and discretizing perturbed samplers}
\author[1]{Benjamin J.\ Zhang\footnote{Corresponding author. Email: \texttt{bjzhang@umass.edu}}$^\dagger$}
\author[2]{Youssef M.\ Marzouk\footnote{BJZ and YMM were partially supported by the AFOSR MURI
Analysis and Synthesis of Rare Events, award FA9550-20-1-0397. }}
\author[3]{Konstantinos Spiliopoulos\footnote{KS was partially supported by NSF DMS-2107856, DMS-2311500 and Simons Foundation Award  672441.}}
\affil[1]{\small{Department of Mathematics and Statistics, University of Massachusetts Amherst}}
\affil[2]{\small{Department of Aeronautics and Astronautics, Massachusetts Institute of Technology}}
\affil[3]{\small{Department of Mathematics and Statistics, Boston University}}
\begin{document}
\maketitle
\begin{abstract}
  Langevin dynamics are widely used in sampling high-dimensional, non-Gaussian distributions whose densities are known up to a normalizing constant. In particular, there is strong interest in unadjusted Langevin algorithms (ULA), which directly discretize Langevin dynamics to estimate expectations over the target distribution.  We study the use of transport maps that approximately normalize a target distribution as a way to precondition and accelerate the convergence of Langevin dynamics.  We show that in continuous time, when a transport map is applied to Langevin dynamics, the result is a Riemannian manifold Langevin dynamics (RMLD) with metric defined by the transport map. We also show that applying a transport map to an irreversibly-perturbed ULA results in a geometry-informed irreversible perturbation (GiIrr) of the original dynamics. These connections suggest more systematic ways of learning metrics and perturbations, and also yield alternative discretizations of the RMLD described by the map,  which we study.
  Under appropriate conditions, these discretized processes can be endowed with non-asymptotic bounds describing convergence to the target distribution in 2-Wasserstein distance. Illustrative numerical results complement our theoretical claims.
 \end{abstract}

    \begin{keywords}
        Langevin dynamics, transport maps, Bayesian inference, Markov chain Monte Carlo, geometry-informed irreversiblility
      \end{keywords}
      \begin{AMS}
        62D99, 60H35
      \end{AMS}

\section{Introduction}
\label{sec:intro}
\normalsize

Langevin dynamics frequently appear in sampling methods for high dimensional, non-Gaussian probability distributions. Given access to the gradient of the logarithm of the target density, many Markov chain Monte Carlo (MCMC) methods use Langevin dynamics as a core ingredient. Particular recent interest is in unadjusted Langevin algorithms (ULA), in which the empirical average over a single chain is used to approximate expectations with respect to the invariant distribution, without a Metropolis-Hastings correction.

While standard ULA has proven to be a valuable tool in simulation, it often exhibits slow convergence. Many methods have been proposed to improve the convergence properties of standard ULA. One such class of methods involves simulating from appropriate reversible and irreversible perturbations of the original Markov process, e.g., \cite{girolami2011riemann,hwang1993accelerating,hwang2005accelerating,lelievre2013optimal,rey2015irreversible,rey2015variance,rey2016improving,zhang2022geometry,cui2024optimal}. Such methods have been shown to be effective in accelerating convergence while guaranteeing improved performance of the corresponding ergodic estimator; see Section \ref{subsec:perturbations}.

A different class of sampling methods that has been explored in the literature employs transport maps; see, for instance, \cite{el2012bayesian,brennan2020greedy,kobyzev2020normalizing,marzouk2016introduction,peherstorfer2019transport,rezende2015variational,papamakarios2021normalizing,tabak2010density}. Transport maps provide functional representations of complex random variables by transforming them, deterministically and invertibly, to a simple reference random variable of the same dimension---for example, a standard Gaussian. We discuss these methods in Section \ref{subsec:transportmaps}. An ``exact'' transport map $S_\text{exact}$ achieves this measure transformation exactly, i.e., it pushes forward a complex target distribution $\pi$ to a desired reference distribution $\eta_\text{ref}$, a relationship we denote as $(S_\text{exact})_\sharp \pi = \eta_\text{ref}$. If such a map is constructed, one can efficiently sample from the target $\pi$ by evaluating $S_\text{exact}^{-1}$ on samples drawn from the reference distribution.

A crucial starting point for this work is the observation that no matter what form of transport map $S$ and construction procedure are chosen, the pushforward $S_\sharp \pi$ obtained in practice will generally \textit{\textbf{not} be equal to the {desired} reference distribution}---due to limitations of the (finite) approximation space in which the map is sought, the finite number of samples used for estimation, behavior of the associated numerical optimization procedure, etc. We demonstrate this idea in Figure \ref{fig:CartoonPushForwardReference}, which represents an example analyzed in detail in Section \ref{SS:FunnelDensity}.
In this case, using the transport map $S$ directly for sampling will lead to bias, as $S^{-1}_\sharp \eta_\text{ref}$ will differ from the target measure $\pi$. As we will see below, combining any such \textit{approximate map} with Langevin dynamics can mitigate this bias. We will also see that approximate maps can \emph{accelerate} the convergence of Langevin sampling for $\pi$.

    \begin{figure}[h]
  \centering
\includegraphics[trim = 100 120 150 120, clip, width=0.7\linewidth]{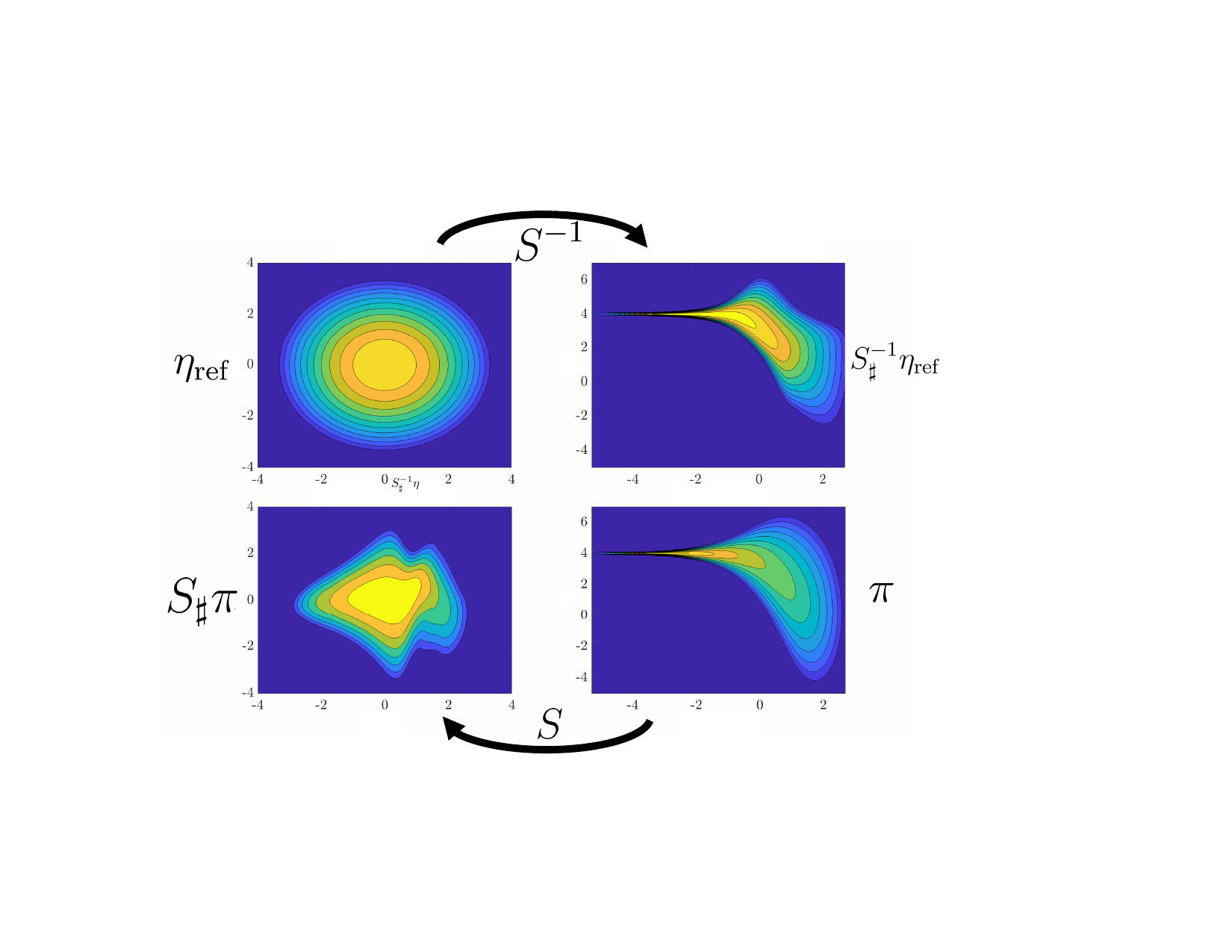}
  \caption{The pushforward distribution $S_\sharp \pi$ obtained with any approximate map $S$ (bottom left) is not equal to the \emph{desired} reference distribution $\eta_\text{ref}$ (top left). This is a cartoon representation of the example studied in Section \ref{SS:FunnelDensity}.}
  \label{fig:CartoonPushForwardReference}
\end{figure}

\bigskip

\noindent
\textit{Our contribution.}
The contribution of this paper is threefold. First, we rigorously demonstrate that applying a transport map to standard overdamped Langevin dynamics (OLD) corresponds to a Riemannian manifold Langevin dynamics (RMLD) \cite{girolami2011riemann}, which is equivalent to a reversible perturbation of the OLD \cite{rey2016improving}. Moreover, we show that applying a transport map to an irreversibly-perturbed OLD system results in a geometry-informed irreversible perturbation (GiIrr) \cite{zhang2022geometry} of the original dynamics. The resulting reversible or irreversible dynamics are naturally linked to the chosen transport map in a precise way; see Section \ref{subsec:connections}. A key consequence of this link is a \emph{systematic} way of learning reversible perturbations---by learning maps.

Second, while transport map transformations of the OLD are equivalent to RMLD or GiIrr in continuous time, we show that when \emph{discretized}, the resulting discrete-time Markov chains have different properties. Specifically, we present the transport map unadjusted Langevin algorithm (TMULA), which is the stochastic process produced by applying a transport map to an Euler-Maruyama discretization of the OLD. The differences between a TMULA and the process produced by a direct discretization of RMLD depend on how strongly the transport map departs from linearity. We discuss these discretization issues in Section \ref{subsec:tmrmld}. A schematic representation of these observations is in Figure \ref{fig:CartoonContinuousDiscrete}; see Section \ref{sec:background} for notational details.
\textcolor{black}{The simulation studies of Section \ref{sec:numerical} demonstrate that TMULA typically produces better samplers than the discretization of the equivalent RMLD or GiIrr.}
\begin{figure}[h!]
  \centering
  \includegraphics[width = \textwidth,trim = 30 90 10 90, clip]{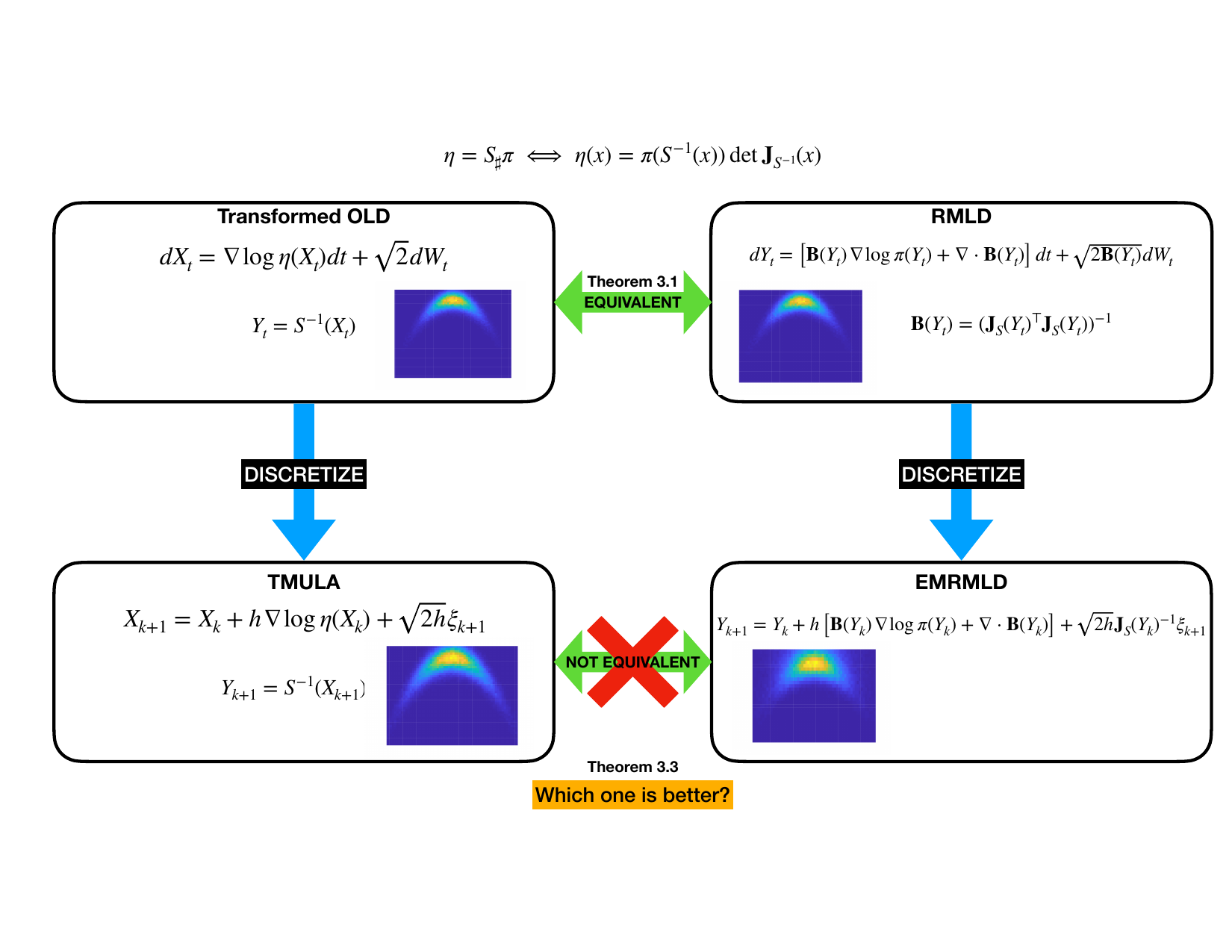}
  \caption{TMULA and RMLD are equivalent in continuous time, but not necessarily after discretization. See Section \ref{subsec:tmrmld} for details. The banana-shaped density in each panel corresponds to each algorithm discussed in Section \ref{SS:BananaExample}. The bottom left panel shows that discretizing TMULA produces a more accurate approximation of the target than discretizing RMLD.}
  \label{fig:CartoonContinuousDiscrete}
\end{figure}

Third, we prove theoretical guarantees of convergence of TMULA in the $2$--Wasserstein metric under the assumptions that (a) the transport map is sufficiently monotone (Assumption \ref{ass:mapmonotone}) and (b) the logarithm of the pushforward of the target measure under the chosen transport map is strongly convex and has Lipschitz gradients (Assumption \ref{ass:potentialass}); see Section \ref{subsec:tmulaanalysis}. In Remark \ref{R:AdvantageUsingTM}, we demonstrate that even when one samples from a log-concave distribution with ULA, it is often practically advantageous to use a transport map when available. Moreover, our theory provides guidance on the ideal choice of reference measure when constructing maps. \textcolor{black}{However, we emphasize that in this paper we do not address the general question of how to judiciously construct transport maps to optimize sampling performance. This question and some others that are left for future work are summarized in Section \ref{S:Conclusion}}.

\textbf{Related work.}
To set these contributions in context, let us discuss the connections between TMULA and other mapped Langevin processes. The mirrored-Langevin \cite{hsieh2018mirrored} and mirror-Langevin dynamics (MLD) \cite{chewi2020exponential,zhang2020wasserstein} are particular versions of TMULA in which the transport map is defined as the gradient of a convex scalar function. MLD \cite{hsieh2018mirrored} is most similar to the dynamics we study here, as the reference process is a Langevin process forced by an isotropic Brownian motion, while the MLD in \cite{zhang2020wasserstein} has a state-dependent diffusion matrix. The transformed unadjusted Langevin algorithm (TULA) is another example of a mapped Langevin process \cite{he2022heavy}, concerned with sampling heavy-tailed distributions. This algorithm uses specific isotropic maps, inspired by the approach in in \cite{johnson2012variable} for random-walk Metropolis samplers, to transform the heavy-tailed target distribution to have sub-exponential tails. Standard ULA is then used to sample from the transformed distribution. In \cite{cui2024optimal}, the \emph{optimal} reversible perturbation is characterized as the minimizer of a Riemannian Poincar\'e constant. The existence of such a minimizer is provided in terms of a so-called moment map.

MLD, TULA, and the design of an optimal Riemannian metric are cases of TMULA when the map or reversible perturbation is defined \emph{a priori}, often for the specific purpose of transforming the support or tails of the target distribution. Meanwhile, a great deal of recent work has focused on \emph{learning} a flexible and expressive transport map from an unnormalized density and/or from samples. For example, normalizing flows---popular in both variational inference and generative modeling---define an invertible map via a composition of simple transformations, parameterized by neural networks; these maps can be learned using evaluations of the unnormalized target density \cite{rezende2015variational}, using samples \cite{tabak2010density,kobyzev2020normalizing,papamakarios2021normalizing}, or both \cite{kohler2020equivariant}. Other classes of maps used in practice include triangular maps \cite{baptista2020adaptive,marzouk2016introduction} or approximations of optimal transport maps \cite{onken2021ot}.

We emphasize that the results in this paper are not tied to any particular class of monotone maps or map construction algorithm. Our goal is to study the impact of a transport map (e.g., one that approximately normalizes a target distribution) on perturbations and discretization of the overdamped Langevin process for that distribution. In fact, one can envision many algorithms that include TMULA as a core element, in which a map is learned either \emph{a priori} via samples or density evaluations, or instead learned and refined on-the-fly \cite{parno2018transport,gabrie2022adaptive}, and then applied to an overdamped Langevin process. Our numerical studies employ a specific computationally tractable class of monotone triangular maps that approximate the Knothe--Rosenblatt rearrangement \eqref{Eq:KR_structure} (see Section \ref{subsec:transportmaps} and \cite{carlier2010knothe,marzouk2016introduction,baptista2020adaptive}  for a quick review), but many other numerical choices are possible.

\medskip
The rest of the paper is organized as follows. In Section \ref{sec:background}, we review reversible and irreversible perturbations to ULA as well as transport maps. Section \ref{sec:tmula} contains our main results on TMULA, i.e., continuous time equivalence to RMLD or GiIrr, their differences when discretized, and the theoretical results on convergence in Wasserstein metric; proofs are deferred to Appendices \ref{app:connections} and \ref{app:wasser}. In Section \ref{sec:numerical}, we present four simulation studies to showcase the performance of TMULA. In Section \ref{S:Conclusion}, we discuss some future avenues for research.

\small
\subsection*{Glossary of acronyms}
For readability, we provide a glossary of the most frequently used acronyms in this paper in Table \ref{table:glossary}.
\begin{table}[H]
\begin{tabular}{>{\raggedright}p{1.5cm}| p{8cm}| p{2cm}}
Acronym & Meaning & Reference \\
\hline
OLD     & Overdamped Langevin dynamics & \eqref{eq:old}\\[5pt]
ULA     & Unadjusted Langevin algorithm & \eqref{eq:dld}, \eqref{eq:ula}\\[5pt]
RMLD    & Riemannian manifold Langevin dynamics & \eqref{eq:revperturb}\\[5pt]
GiIrr   & Geometry-informed irreversible perturbation & \cite{zhang2022geometry} \\[5pt]
TMRMLD  & Transport map defined RMLD & Thm~\ref{prop:TMRMLD} \\[5pt]
TMULA   & Transport map unadjusted Langevin algorithm &  Sec~\ref{subsec:transformingLD}, \eqref{eq:onesteptmula}\\[5pt]
EMRMLD  & Euler-Maruyama discretization of TMRMLD & \eqref{eq:onestepemrmld} \\[5pt]
ATM     & Adaptive transport maps & \cite{baptista2020adaptive}, \cite{baptista2022atm} \\ [5pt]
KSD & Kernelized Stein discrepancy & \cite{GorhamKSD} \\ [5pt]
UILA    & Unadjusted implicit Langevin algorithm & Sec~\ref{SS:RosenbrockDistribution}, \cite{casella2011stability} \\[5pt]
TMUILA  & Transport map unadjusted implicit Langevin algorithm & Sec~\ref{SS:RosenbrockDistribution}, \eqref{eq:splitstep}
\end{tabular}
\centering
\caption{Glossary of acronyms used in this paper in order of appearance.}
\label{table:glossary}
\end{table}
\normalsize

\section{Langevin samplers and transport maps}
\label{sec:background}
We review the overdamped Langevin dynamics (OLD) and techniques for accelerating their convergence via reversible and irreversible perturbations. We also provide an overview of measure transport.%

\subsection{Overdamped Langevin dynamics and the unadjusted Langevin algorithm (ULA)}
\label{subsec:ula}
Let $Y$ be a $\R^d$--valued random variable with unnormalized target density $\pi$.
A stochastic differential equation (SDE) that has $\pi$ as its invariant density is the so-called OLD:
\begin{align}
  \de Y(t) = \nabla \log \pi(Y(t)) \de t + \sqrt{2} \de W(t). \label{eq:old}
\end{align}
Here,  $W(t)$ is a standard Brownian motion on $\R^d$. By ergodicity, expectations of test functions $\phi:\R^d\to \R^d$ with respect to $\pi$ can be approximated by the time average of $\phi(y)$ over a single trajectory, e.g., \cite[Section 2.4]{pavliotis2014stochastic}:
\begin{align}
  \Ex_\pi[\phi(Y)] = \int_{\R^d} \phi(y) \pi(y) \de y = \lim_{T\to\infty} \frac{1}{T} \int_0^T \phi(Y(t)) \de t. \label{eq:ergodic}
\end{align}
To practically use this relation to estimate expectations with respect to $\pi$, the Langevin dynamics must be discretized. An Euler-Maruyama discretization of \eqref{eq:old} yields
\begin{align}
  Y_{k+1} = Y_k + h\nabla\log \pi(Y_{k+1}) + \sqrt{2h} \xi_{k+1}, \label{eq:dld}
\end{align}
where $\xi^{k+1} \sim \mathcal{N}(0,\mathbf{I})$ are independent, $h$ is the step size, and $Y_k = Y(kh)$; see \cite{kloeden1992stochastic} for a general overview of SDE discretization schemes. The expectation is then approximated by
\begin{align}
  \Ex_\pi[\phi(Y)] \approx \frac{1}{K}\sum_{k = 0}^{K-1} \phi(Y_k).\label{eq:ula}
\end{align}
The algorithm computing this estimator is known as the \emph{unadjusted Langevin algorithm} (ULA), and has been used in problems ranging from molecular simulation to Bayesian inference \cite{durmus2019high}. While the relation in \eqref{eq:ergodic} is exact, discretization introduces bias into the estimator. Some previous research focuses on removing this bias by considering the \emph{Metropolis-adjusted Langevin algorithm} (MALA), in which the Langevin dynamics is used as a proposal distribution within the Metropolis-Hastings algorithm \cite{roberts1996exponential,girolami2011riemann}. Recently, there has been renewed interest in ULA, especially for high-dimensional problems in machine learning. In particular, \cite{durmus2017nonasymptotic,durmus2019high} provide a collection of convergence results for ULA on strongly log-concave distributions. They derive inequalities showing fast convergence of ULA to the invariant distribution in the 2-Wasserstein and total variation distances, and use those results to obtain bounds on the mean-squared error of ULA for Lipschitz and bounded test functions \cite{durmus2019high}. Related work extends these results to distributions that satisfy log-Sobolev inequalities \cite{chewi2021analysis,erdogdu2021convergence,vempala2019rapid}.

\subsection{Accelerating Langevin-based samplers via perturbations} \label{subsec:perturbations}
The OLD in \eqref{eq:old} is not the only stochastic dynamical system that has $\pi$ as its invariant density. In fact, there are infinitely many such systems, some of which can converge to $\pi$ more quickly than OLD \cite{pavliotis2014stochastic}. The work of \cite{hwang2005accelerating,rey2015irreversible,rey2015variance,rey2016improving} show that appropriately chosen reversible or irreversible perturbations to the OLD can accelerate its convergence to the invariant distribution.

\subsubsection{Reversible and irreversible perturbations}\label{SS:ReversiblePerturbations}
Let $\mathbf{B}$ be a matrix-valued valued function that maps $y\in\R^d$ to symmetric positive definite matrices $\mathbf{B}(y) \in \R^{d\times d}$. A reversibly perturbed overdamped Langevin dynamics is given by
\begin{align}
  \de Y(t) = \left[  \mathbf{B}(Y(t)) \nabla \log \pi(Y(t)) + \nabla \cdot \mathbf{B}(Y(t)) \right] \de t + \sqrt{2 \mathbf{B}(Y(t))} \de W(t), \label{eq:revperturb}
\end{align}
where  $(\nabla \cdot \mathbf{B}(y))_i = \sum_{j = 1}^d \frac{\partial \mathbf{B}_{ij}(y)}{\partial y_j}.$ In \cite{rey2016improving}, the authors show that if $\mathbf{B}(y) - \mathbf{I} \succ \mathbf{0}$, then accelerated convergence to the invariant density will be achieved. The degree of accelerated convergence depends on the choice of $\mathbf{B}$, but, in general, there is not a known optimal choice of $\mathbf{B}$. One well-studied reversible perturbation is the \emph{Riemannian manifold Langevin dynamics} (RMLD), in which the matrix $\mathbf{B}(y) = \mathbf{G}(y)^{-1}$ is defined in terms of the metric of a Riemannian manifold. In particular, when the target distribution is the posterior of a Bayesian inference problem, \cite{girolami2011riemann} chooses the metric $\mathbf{G}(y)$ to be the expected Fisher information matrix plus the Hessian of the log-prior. We refer the reader to \cite{livingstone2014information,xifara2014langevin} for further discussion of RMLD and related MCMC methods and to \cite{trillos2020bayesian} for an exploration of Bayesian inference from an optimization perspective.

With regard to irreversible perturbations, consider the SDE
\begin{align}
  \de Y(t) = \left[ \nabla \log \pi(Y(t)) + \psi(Y(t)) \right] \de t + \sqrt{2} \de W(t). \label{eq:irrev}
\end{align}
An irreversible perturbation corresponds with choosing the term $\gamma$ such that the resulting SDE is not time-reversible, while preserving the invariant distribution of the unperturbed system. By using the Fokker--Planck equation of \eqref{eq:irrev}, one can show that to preserve the invariant measure, $\psi$ should be chosen such that $\nabla \cdot (\psi(y) \pi(y)) = 0$ for all $y$. A simple and frequently used choice is $\psi(y) = \mathbf{C}\nabla \log \pi(y)$, where $\mathbf{C}$ is a constant skew-symmetric matrix, i.e., $\mathbf{C} = -\mathbf{C}^\top$. This irreversible perturbation is computationally attractive, as applying it requires only one additional matrix-vector multiplication for each step of the Langevin dynamics. The work \cite{zhang2022geometry} proposes an irreversible perturbation tailored to an already reversibly-perturbed OLD, and empirically shows the advantages of using a state-dependent skew-symmetric matrix in the irreversible perturbation. The resulting so-called \emph{geometry-informed irreversible perturbation} is $\psi(y) = \mathbf{C}(y)\nabla \log \pi(y) + \nabla\cdot\mathbf{C}(y)$, where $\mathbf{C}(y) = -\mathbf{C}(y)^\top$. Theoretical results show that, in continuous time, the performance of the Langevin sampler cannot be hurt by the choice of the irreversible perturbation; see \cite{hwang2005accelerating,rey2015irreversible,rey2015variance,rey2016improving}. We also refer the interested reader to the works of \cite{lelievre2013optimal,ottobre2019optimal,zhang2022geometry} for further results related to irreversible perturbations.

\subsection{Transport maps}
\label{subsec:transportmaps}
Transport maps arise from deterministic couplings of probability measures; see \cite{marzouk2016introduction,villani2009optimal} for an overview. Let smooth probability densities $\eta$ and $\pi$ correspond to probability measures $\mu_\eta$ and $\mu_\pi$, respectively. 
A coupling of the two measures is a joint measure $\zeta$ whose marginals are $\mu_\eta$ and $\mu_\pi$. A coupling is deterministic if there exists a measurable function $T$, called a transport map, such that for all {$\mu_{\pi}$-}measurable sets $A$, $\mu_\pi(A) = \mu_\eta(T^{-1}(A))$ \cite{villani2009optimal}. This is denoted $\mu_\pi = T_\sharp \mu_{\eta}$,\footnote{We will also abuse notation by writing $\pi = T_\sharp \eta$.} and $\mu_\pi$ is the pushforward of $\mu_\eta$ through the map $T$. If $T$ is invertible, then $\pi(y) = \eta(T^{-1}(y)) \det \mathbf{J}_{T^{-1}}(y)$, where $\mathbf{J}_T$ is the Jacobian of $T$.

Transport maps provide a method for expressing complex random variables in terms of simpler ones. For example, if we choose $X\sim \eta \equiv \mathcal{N}(0,\mathbf{I})$, and construct a map $T: \R^d\to \R^d$ such that $\pi = T_\sharp \eta$, then we can produce independent realizations of random variable $Y\sim \pi$ by drawing independent realizations of the standard normal and evaluating the transport map at the sample points, i.e., $T(X) \sim \pi$. In general, there are infinitely many maps that can induce a deterministic coupling of $X$ and $Y$ \cite{villani2009optimal}, and there are many computational methods for constructing them, e.g., via optimal transport \cite{el2012bayesian,onken2021ot,manole2021plugin}, triangular transport \cite{baptista2020adaptive,brennan2020greedy,peherstorfer2019transport}, normalizing flows \cite{kobyzev2020normalizing,papamakarios2021normalizing,tabak2013family}, and so on.

\subsubsection{Monotone triangular transport maps}
\label{sec:triangular}
While the sampling algorithms and theoretical results of this paper will apply to \emph{any} sufficiently smooth and invertible transport map, our numerical illustrations will employ a specific, computationally tractable class: {triangular monotone maps}. Let $Y = (y_1,\ldots,y_d)$ and consider a map $S$ of the form
\begin{align}\label{Eq:KR_structure}
  S(Y) = \begin{bmatrix*}[l]
    S_1(y_1) \\
    \, \, \vdots \\
    S_d(y_1,y_2,\ldots,y_d)
  \end{bmatrix*},
\end{align}
where the $i$th component of the map is monotone increasing in $y_i$, i.e., $\frac{\partial S_i}{\partial y_i} >0, \, \forall i$. This monotonicity condition ensures that $S$ is invertible, i.e., that there exists a map $T=S^{-1}$. A triangular monotone map satisfying $S_\sharp \mu_{\pi} = \mu_{\eta}$ is guaranteed to exist when $\mu_{\pi}$ and $\mu_{\eta}$ are atomless (which is always the case in this paper, as we assume both measures to have Lebesgue densities on $\mathbb{R}^d$). A map $S$ of the form \eqref{Eq:KR_structure} is often called a Knothe--Rosenblatt (KR) rearrangement; see \cite{rosenblatt1952remarks,bogachev2005triangular,carlier2010knothe,marzouk2016introduction}.

Enforcing triangular structure offers several advantages. First, the pushforward density of $\pi$ through $S$ can be easily computed since the determinant of the Jacobian of $S$ is the product of diagonal entries. Second, the triangular structure of the map allows the inverse to be easily computed.  Evaluating $S^{-1}(x)$ involves solving a sequence of $d$ one-dimensional root finding problems, and since the $i$th entry of each map component is monotone with respect to $y_i$, the roots are unique. Third, triangular maps enjoy certain sparsity and decomposability properties that follow from the conditional independence structure of $\pi$, particularly when $\eta$ is chosen to be a product measure. These properties are useful when building transport maps in high dimensions, but do depend on the ordering of the variables in the triangular construction; see \cite{spantini2018inference} for details.  We refer the interested reader to \cite{baptista2020adaptive} for a construction of triangular maps that is based only on samples of $\pi$ and guarantees monotonicity everywhere.

\textcolor{black}{
Most methods for computing monotone triangular maps minimize the Kullback--Leibler (KL) divergence between the target $\pi$ and approximating distribution $T_\sharp \etaref$, for some fixed choice of $\etaref$; the direction of the KL divergence depends on what information is available. \cite{el2012bayesian} constructs maps given unnormalized evaluations of the density $\pi$, by minimizing  $D_{\text{KL}}( T_\sharp \etaref \| \pi)$ plus a penalty term chosen so that the minimizer is monotone with high probability; the map is represented in a linear space (for instance in the span of a chosen set of polynomials). \cite{baptista2020adaptive}, in contrast, introduces a triangular map parameterization that guarantees monotonicity everywhere, and shows how to construct the transport map given only samples of $\pi$. The $i$-th component function of the map $S$ \eqref{Eq:KR_structure} is chosen to be of the form $S_i(y_1,\ldots,y_i ) = f(y_1,\ldots,y_{i-1},0) + \int_0^{y_i} g(\partial_{i} f^i(y_1,\ldots,y_{i-1},t)) \de t$,
where $f^i: \R^i \to \R$ is represented in a polynomial or wavelet basis and $g: \R \to \R_{> 0}$ is a bijective and strictly positive function, such as a softplus or shifted exponential linear unit. The parameters of the map are then learned by minimizing an empirical approximation of $D_{\text{KL}}(\pi \| S^{-1}_\sharp \eta)$. This is equivalent to maximum likelihood estimation of the map: given samples $\{Z^i\}_{i = 1}^N \sim \pi$, we find $\hat{S} = \arg\max \sum_{i = 1}^N \log S^{-1}_\sharp \eta(Z^i)$, where $\eta$ is a standard Gaussian density and the optimization is over the space of functions $\{f^i\}_{i=1}^d$ in the monotone parameterization described above. Crucially, under appropriate conditions on $g$, this optimization problem is smooth and has \emph{no spurious local optima}. Moreover, \cite{baptista2020adaptive} describes a greedy algorithm for enriching the function spaces in which one seeks each $f^i$, yielding a semi-parametric estimation procedure. We employ this approach to build the maps used in our numerical studies, with code available in the \texttt{ATM} GitHub repository \cite{baptista2022atm}.}

\section{Transport map unadjusted Langevin algorithm (TMULA)}
\label{sec:tmula}

Now we present our main methodology, TMULA. The main idea behind TMULA is to apply a transport map to the target distribution so that the pushforward of the target through the map becomes easier to sample using ULA. We show that, in continuous time, this construction can be understood as creating a reversible perturbation to the original OLD, and that it also naturally gives rise to geometrically-informed irreversible perturbations. Comparing TMULA to a direct discretization of the equivalent RMLD, we show that TMULA provides a different and in some cases more accurate approximation of the desired target measure. Finally, building on recent analysis of ULA, we show that in discrete time, under appropriate conditions on the map and target, TMULA converges geometrically to the target distribution in $\mathcal{W}_2$ and that the rate of convergence can be accelerated relative to the original OLD.

\subsection{Transforming Langevin dynamics}
\label{subsec:transformingLD}
Suppose we are given a continuously differentiable unnormalized target density $\pi$ with support on $\R^d$ and an invertible twice-continuously differentiable map $S: \R^d \to \R^d$. TMULA consists in applying ULA to the resulting pushforward density $\eta \coloneqq S_\sharp \pi$.\footnote{Following the discussion in Section~\ref{sec:intro}, we note that in general $\eta$ will differ from the ideal desired reference $\etaref$ used in any given map construction procedure. Once a map is constructed, the desired reference is no longer directly relevant; we work only with $\pi$, $S$, and the resulting $\eta$ from the numerical algorithm.}
Samples from $\pi$ are then obtained by applying the inverse map $T = S^{-1}$ to the trajectory produced by such an ULA.

Different choices of the transport map $S$ yield different versions of TMULA. Recall that $\eta(x) = (S_\sharp\pi)(x) = \pi(T(x))\det \mathbf{J}_{T}(x)$, where $\mathbf{J}_T(x)$ is the Jacobian of $T$. Computing the gradient of $\log\eta(x)$ requires computing the gradient of the log determinant of $S$. We have
\begin{align*}
  \nabla_x \log \eta(x) = \nabla_x(\log \pi(T(x)) \det \mathbf{J}_T(x)) = \mathbf{J}_T^\top  \nabla_y \log \pi(T(x)) - \mathbf{J}_{T}^\top\nabla_y \log \det \mathbf{J}_S(y),
\end{align*}
where $y=T(x)$. While in principle the map $S$ only needs to be invertible and twice continuously differentiable, for computational purposes $S$ should be chosen so that $\nabla_y \log \det \mathbf{J}_S(y)$ can be computed efficiently. For example, \cite{hsieh2018mirrored} defines $S$ as the gradient of a convex scalar function, but only considers cases in which the determinant of the Hessian of this function can be computed exactly. If $S$ is constructed with a normalizing flow \cite{kobyzev2020normalizing,papamakarios2021normalizing}, there are many parameterizations that allow efficient computation of the log determinant. Triangular maps, as described in Section~\ref{sec:triangular}, in general allow for fast computation not only of the log determinant but also its gradient. In the triangular case, we have $\log \det \mathbf{J}_T(x) = -\log\det \mathbf{J}_S(y) = - \sum_{i = 1}^d \log \frac{\partial S_i}{\partial y_i}$, and so
\begin{align}
  \nabla_x \log \eta(x) =   \mathbf{J}_S^{-\top}(T(x))\left(  \nabla_y \log \pi(T(x)) - \sum_{i = 1}^d \left( \frac{\partial S_i}{\partial y_i} (T(x)) \right)^{-1} \mathbf{H}_i(T(x)) \right), \label{eq:triangulargradlogdet}
\end{align}
where $\mathbf{H}_i(y) =  \left[ \frac{\partial^2 S_i}{\partial y_1\partial y_i}(y) , \ldots, \frac{\partial^2S_i}{\partial y_d \partial y_i}(y) \right]^\top$; see also \cite{parno2018transport}.

Regardless of the particular choice of the invertible map, applying ULA to $\eta = S_\sharp \pi $ with step size $h$ yields a discrete-time Markov chain,
\begin{align}
  X_{k+1} = X_k + h\nabla_x \log \eta(X_k) + \sqrt{2 h} \xi_{k+1}, \, \xi_{k+1} \sim \mathcal{N}(0,\mathbf{I}),\nonumber
\end{align}
which produces trajectories that approximately sample from $\eta$, while $Y_k = T(X_k)$ will produce trajectories that approximately sample $\pi$.

The idea of transforming a Langevin sampler is not new. Indeed, there have been many instances of using some type of invertible mapping to ``reshape'' the target distribution so that sampling schemes, such as ULA or MCMC, will perform better. For example, \cite{johnson2012variable} proposes a class of isotropic invertible transformations for sub-exponentially light densities that makes them super-exponentially light. The random-walk Metropolis algorithm can then be shown to be geometrically ergodic for the transformed densities. In the same spirit, \cite{he2022heavy} uses the same transformations for heavy-tailed densities but focuses on the convergence properties of ULA. \cite{hsieh2018mirrored,chewi2020exponential,zhang2020wasserstein} consider Langevin sampling for densities with bounded support, transforming to densities supported on $\R^d$. These papers consider specific maps defined as gradients of convex functions, so that the inverse map is given by a Legendre transformation. %

In the efforts just described, however, the map $S$ is not tailored to the target density at hand, other than to its support or the decay rate of its tails. In contrast, we are interested in more general situations where an invertible map (in general an \textit{approximate} map) \emph{can be} learned and tailored to reshape the entire density. Recent literature has seen many algorithmic realizations of this idea. \cite{brennan2020greedy} learns both normalizing flows and triangular maps from unnormalized target densities, and uses these maps to transform the target density before applying MCMC. \cite{peherstorfer2019transport} takes a similar approach, but proposes using a computationally cheaper approximation of the target density to learn the map, before performing MCMC on a transformation of the true target density. Other work seeks to learn and update a map on-the-fly using samples produced during MCMC: \cite{parno2018transport} does so with triangular maps, and proves ergodicity of the resulting adaptive MCMC approach; \cite{gabrie2022adaptive} pursues a similar construction with normalizing flows. These efforts have all demonstrated strong empirical performance, but so far there is little theoretical understanding of these performance gains. The use of general transport maps with ULA also has not, to our knowledge, been studied.

More fundamentally, we emphasize that details of how the map is learned or constructed are not our focus here. Rather, we are interested in the underlying question of how the performance of ULA is affected by the introduction of a mapping: how this relates to other perturbations of the Langevin dynamics, what finite-sample convergence guarantees can be derived, and what is the impact of a map on asymptotic bias.

\subsection{Connections to perturbations of overdamped Langevin dynamics}
\label{subsec:connections}

One interpretation of TMULA is that the map $S$ ``preconditions'' the target density to improve the performance of ULA. Earlier, we discussed how reversible perturbations can be interpreted as preconditioning the gradient of the log-density so that Langevin dynamics converges faster. We now make a precise connection between these two types of preconditioning in the continuous-time setting. Specifically, we show that the stochastic process induced by OLD on $\eta$, then transformed by $T$, is an RMLD \cite{girolami2011riemann}. Moreover, we also show that a transport map applied to an irreversibly perturbed OLD produces a GiIrr \cite{zhang2022geometry}.
The following results assume only that the map is invertible and twice-continuously differentiable. Proofs are given in Appendix \ref{app:connections}.

\subsubsection{Transport maps induce reversible perturbations}

Let $\pi$ be a continuously differentiable probability density with support on $\R^d$ and let $S$ be an invertible, twice-continuously differentiable map, with $T \coloneqq S^{-1}$. Define the density $\eta$ as the pushforward of $\pi$ under $S$, i.e., $\eta \coloneqq S_\sharp \pi$, and let $X(t)$ be the overdamped Langevin dynamics on $\eta$:
\begin{align}
  \de X(t) &= \nabla \log \eta(X(t)) \de t + \sqrt{2} \de W(t). \label{eq:ou}
\end{align}
\begin{theorem}{(\texttt{TM} + \texttt{LD} = \texttt{TMRMLD})}
\label{prop:TMRMLD}
 The diffusion process $Y(t) = T(X(t))$, where $X(t)$ is the process \eqref{eq:ou}, evolves according to
  \begin{align}
    \de Y(t) = \left[\B(Y(t)) \nabla \log \pi(Y(t)) + \nabla \cdot \B(Y(t)) \right] \de t+ \sqrt{2\B(Y(t))} \de W(t),
    \label{eq:rmld1}
  \end{align}
with $\B(y) = (\J_S^\top\J_S(y))^{-1} = \J_T\J_T^\top(y)$ and $\sqrt{\B(y)} = \J_S^{-1}(y) = \J_T(y)$.
\end{theorem}
This result implies that prescribing a map $S$ from $\pi$ to some $\eta = S_\sharp \pi$ is the same as choosing a reversible perturbation of the OLD for $\pi$. This link provides a systematic way of  building reversible perturbations---beyond, for example, standard choices based on local curvature \cite{girolami2011riemann,martin2012stochastic}---by constructing a  map $S$ that transforms $\pi$ to a desired reference distribution. We will comment in Section \ref{subsec:tmulaanalysis} on what constitutes a ``good'' choice of reference in this setting.
In the other direction, the perspective that certain reversible perturbations correspond to transformed Langevin processes allows us to analyze the quality and convergence properties of such RMLD algorithms by leveraging the analysis of ULA; see again Section \ref{subsec:tmulaanalysis}.

\subsubsection{Transport maps induce geometry-informed irreversibility}

Irreversible perturbations to Langevin dynamics are known to accelerate convergence to the equilibrium distribution by taking advantage of the anisotropy of the target distribution. A natural question to ask is: what is the stochastic process that results from applying a transport map to a reference Langevin dynamics that has an irreversible perturbation? The following Theorem states that the output is a geometry-informed irreversible perturbation of the RMLD derived in Theorem \ref{prop:TMRMLD}.

\begin{theorem}{(\texttt{TM} + \texttt{Irr} = \texttt{TMGiIrr})}
  \label{prop:tmgiirr}
Let $X(t)\in \R^d$ evolve according to
  \begin{align*}
    \de X(t) = (\mathbf{I} + \mathbf{D})\nabla \log \eta(X(t)) \de t + \sqrt{2} \de W(t)
  \end{align*}
  where $\mathbf{D}$ is a constant skew-symmetric matrix, i.e., $\mathbf{D}^{\top}=-\mathbf{D}$. Then the stochastic process $Y(t) = T(X(t))$ evolves according to
  \begin{align*}
    \de Y(t) = \left[ \mathbf{P}(Y(t)) \nabla \log\pi(Y(t)) + \nabla \cdot \mathbf{P}(Y(t))  \right] \de t + \sqrt{2 \mathbf{B}(Y(t))} \de W(t),
  \end{align*}
  with $\mathbf{P}(y) = \mathbf{B}(y) + \mathbf{C}(y)$, $\mathbf{B}(y)$ is defined in \eqref{eq:rmld1}, $\mathbf{C}(y) = \mathbf{J}_T(y) \mathbf{D} \mathbf{J}_T^\top(y)$,   $\mathbf{J}_T(y)=\mathbf{J}_S(y)^{-1} $.
\end{theorem}

\begin{remark}{(Not all metrics correspond to maps.)}
  Theorem \ref{prop:TMRMLD} showed that a transport map applied to an overdamped Langevin dynamics is equivalent to a Riemannian manifold Langevin dynamics where the metric is related to the Jacobian of the transport map. The converse of this statement, however, is not true. That is, not all RMLD metrics correspond with a transport map. Let us demonstrate this via a simple counterexample. For Bayesian inverse problems, a common heuristic is to choose the metric as the sum of the expected Fisher information matrix and the negative Hessian of the log-prior \cite{girolami2011riemann,martin2012stochastic}. In Section \ref{SS:FunnelDensity}, we use this heuristic, as applied in \cite{girolami2011riemann}, and find that the metric $\mathbf{G}(y_1,y_2)$ is of the form
  \begin{align*}
    \mathbf{G}(y_1,y_2) = \mathbf{B}(y_1,y_2)^{-1} = \begin{bmatrix}
      {2N\beta+e^{y_2}} & 0 \\ 0 & {Ne^{-2{y_2}} + 1/3}
    \end{bmatrix},
  \end{align*}
  where $N,\beta$ are some positive parameters. This implies that the Jacobian is
  \begin{align*}
    \mathbf{J}_S(y_1,y_2)  = \begin{bmatrix}
      \sqrt{2N\beta+e^{y_2}} & 0 \\ 0 & \sqrt{Ne^{-2{y_2}} + 1/3}.
    \end{bmatrix}
  \end{align*}
If there exists a map that has this matrix as its Jacobian, we find that
\begin{align*}
    &\frac{\partial S_1}{\partial y_1} = \sqrt{2N\beta+e^{y_2}} \implies S_1(y_1,y_2) = y_1\sqrt{2N\beta+e^{y_2}} + C_1(y_2) \\
    &\frac{\partial S_1}{\partial y_2} = 0 \implies S_1(y_1,y_2) = C_2(y_1),
\end{align*}
where $C_1(y_2)$ and $C_2(y_1)$ are functions of $y_2$ and $y_1$ alone, respectively. These two conditions on $S_1(y_1,y_2)$ cannot be satisfied simultaneously, and therefore, there exists no map that corresponds with this metric.
\end{remark}

\small
\begin{table}[]
\centering
\begin{tabular}{l|l}
 \\ \hline
 \textbf{\texttt{LD}}& $\de Y_t = \nabla \log \pi(Y_t) \de t+ \sqrt{2} \de W_t $  \\
 \textbf{\texttt{RMLD}}&  $\de Y_t = \left[\B(Y_t)\nabla \log \pi(Y_t)  +\nabla \cdot \B(Y_t) \right] \de t + \sqrt{2\B(Y_t)} \de W_t$\\
 \textbf{\texttt{Irr}} & $\de Y_t = (\mathbf{I}+\mathbf{D}) \nabla \log \pi(Y_t) \de t+ \sqrt{2} \de W_t $  \\
 \textbf{\texttt{GiIrr}} & $\de Y_t = \left[(\B(Y_t)+\bC(Y_t))\nabla \log \pi(Y_t)  +\nabla \cdot (\B(Y_t)+\bC(Y_t)) \right] \de t + \sqrt{2\B(Y_t)} \de W_t$
\end{tabular}
\caption{Summary of the SDEs that share the same invariant density $\pi(y)$. Here, $\mathbf{B}(y) = \mathbf{B}(y)^\top$, $\mathbf{D} = -\mathbf{D}^\top$, and $\mathbf{C}(y) = -\mathbf{C}(y)^\top$}
\label{table:dynamicssummary}
\end{table}
\normalsize

\begin{table}[]
  \centering
  \begin{tabular}{l|l}
  \texttt{TM} +  \texttt{LD} &  \texttt{TMRMLD} \\ \hline
 ${\footnotesize \begin{dcases} \de X_t = \nabla\log  \eta(X_t) \de t + \sqrt{2} \de W_t  \\ Y_t = S^{-1}(X_t)\end{dcases}}$ &  $\footnotesize{\begin{dcases}\de Y_t = \left[\B(Y_t)\nabla \log \pi(Y_t)  +\nabla \cdot \B(Y_t) \right] \de t + \sqrt{2\B(Y_t)} \de W_t \\ \mathbf{B}(Y_t) = (\mathbf{J}_S(Y_t)^\top \mathbf{J}_S(Y_t))^{-1} \end{dcases}}$
  \end{tabular}
  \caption{Summary of Theorem \ref{prop:TMRMLD} stating that these two systems are equivalent. Here, $\eta = S_\sharp \pi$.  } \label{table:tmrmld}
  \end{table}

  \begin{table}[]
      \centering
    \begin{tabular}{l|l}
       \texttt{TM} +  \texttt{Irr} &  \texttt{TMGiIrr} \\ \hline
   ${\scriptsize
   \begin{dcases}
       \de X_t =(\mathbf{I}+\mathbf{D}) \nabla \log \eta(X_t) \de t + \sqrt{2} \de W_t  \\ Y_t = S^{-1}(X_t)\end{dcases}}$
        &  ${\scriptsize
       \begin{dcases}
       \de Y_t = \left[\mathbf{P}(Y_t)\nabla \log \pi(Y_t)  +\nabla \cdot \mathbf{P}(Y_t) \right] \de t + \sqrt{2\B(Y_t)} \de W_t \\ \mathbf{B}(Y_t) = (\mathbf{J}_S(Y_t)^\top \mathbf{J}_S(Y_t))^{-1} \\ \mathbf{P}(Y_t) = \mathbf{B}(Y_t) + \mathbf{J}_S(Y_t)^{-1}\mathbf{D}\mathbf{J}_S(Y_t)^{-\top}\end{dcases}
       }$
       \end{tabular}
        \caption{Summary of Theorem \ref{prop:tmgiirr} stating that these two systems are equivalent. Here, $\eta = S_\sharp \pi$.  } \label{table:tmgiirr}
    \end{table}

\subsection{Comparing discretizations}
\label{subsec:tmrmld}
Now we study discretizations of the perturbed Langevin systems presented in Section \ref{subsec:connections}. Theorem \ref{prop:TMRMLD} showed that there are two equivalent characterizations of the Riemannian manifold Langevin dynamics for $\pi$---either as a transformation $T = S^{-1}$ of an overdamped Langevin system for $S_\sharp \pi$, or as a reversible perturbation on $\pi$ with matrix $(\mathbf{J}_S^\top\mathbf{J}_S)^{-1}$. To simulate \eqref{eq:rmld1}, we can take advantage of these two characterizations and derive two different discretizations, i.e., two different discrete-time Markov chains.

\subsubsection{One-step analysis}
The first discretization we consider is simply TMULA, summarized as follows:
\begin{align}
  Y_{k+1} \coloneqq &  F_{\text{TMULA}}(Y_k,\xi_{k+1})= T\left(X_k + h \textcolor{black}{\nabla_x \log \eta(X_k)} + \sqrt{2h} \xi_{k+1} \right) \label{eq:onesteptmula} \\ \
  = & T\left(S(Y_k) + h \mathbf{J}_S(Y_k) ^{-\top} \left[\nabla_y \log \pi(Y_k)  -\nabla_y \log\det \mathbf{J}_S(Y_k)
  \right] + \sqrt{2h} \xi_{k+1} \right). \nonumber
\end{align}
As an alternative, we consider the discretization given by a direct application of the Euler-Maruyama scheme to \eqref{eq:rmld1}:
\begin{align}
  Y_{k+1} \coloneqq  & F_{\text{EMRMLD}}(Y_k,\xi_{k+1})\label{eq:onestepemrmld} \\
  = & Y_k + h (\mathbf{J}_S^\top\mathbf{J}_S(Y_k))^{-1}\nabla_y \log \pi(Y_k) + h\nabla\cdot(\mathbf{J}_S^\top\mathbf{J}_S(Y_k))^{-1} + \sqrt{2h}\mathbf{J}_S(Y_k)^{-1} \xi_{k+1}. \nonumber
\end{align}
Here, $\xi_{k+1}\sim \mathcal{N}(0,\mathbf{I})$. The next result relates $F_{\text{TMULA}}$ with $F_{\text{EMRMLD}}$.

\begin{theorem}
  \label{prop:tmuladisc}
It holds that
\[F_{\text{TMULA}}(Y_k,\xi_{k+1}) - F_{\text{EMRMLD}}(Y_k,\xi_{k+1}) = h N_k(Y_k,\xi_{k+1}) + \mathcal{O}(h^{3/2}),\]
where the $i$-th component of $N_k(Y_k,\xi_{k+1})$ is a mean-zero non-Gaussian random variable
\begin{align}
N_k^{(i)}(Y_k,\xi_{k+1}) = \xi_{k+1}^\top \nabla^2 T_i(Y_k) \xi_{k+1} - \sum_{j=1}^d \frac{\partial^2 T_i}{\partial x_j^2}(Y_k).
\end{align}
Moreover,
\begin{align*}
  \Ex&\left[ \left\|F_{\text{TMULA}}(Y_k,\xi_{k+1}) - F_{\text{EMRMLD}}(Y_k,\xi_{k+1})\right\|^2\right] \\ &=h^2 \left[\sum_{i,j = 1}^d \left( \frac{\partial^2T_i}{\partial x_j^2}\right)^2 + \sum_{i,j,l = 1}^d \left(\frac{\partial^2 T_i}{\partial x_j\partial x_l} \right)^2 \right] + \mathcal{O}(h^{5/2}).
\end{align*}
\end{theorem}

We prove this result in Appendix \ref{app:tmulaem}. This result shows that these two discretization schemes share the same first moment, while the discrepancies for higher moments depend on the regularity of the transport map $T$. The two discretizations are identical when the transport map has second and higher derivatives equal to zero, corresponding to a linear map. Otherwise, in contrast to EMRMLD, TMULA typically takes \emph{non-Gaussian} steps on the support of $\pi$.

\subsubsection{Approximations of the invariant measure}\label{SS:BiasApproximation}
The ultimate goal of our study of Langevin algorithms is to approximate expectations with respect to $\pi$. We comment on how accurately the numerical invariant distributions of EMRMLD and TMULA approximate the true target distribution. Our discussion here is based on the results of \cite{abdulle2014high}.

Given a discrete-time stochastic process $\{Y_k\}$ with step size $h$ which approximates a diffusion process $\{Y_t\}$ that has invariant measure $\pi(y)$ on $\R^d$, the asymptotic bias of the ergodic estimator for test function $\phi: \R^d \to \R$ is
\begin{align*}
  e(\phi,h) = \lim_{K \to\infty} \frac{1}{K} \sum_{k = 0}^{K-1} \phi(Y_k) - \int_{\R^d} \phi(y) \pi(y) \de y.
\end{align*}
In \cite{abdulle2014high}, it is shown that when the discrete-time process is constructed using a discretization of local weak order $p$, then the asymptotic bias is of the form
 $ e(\phi,h) = -\lambda_p h^p + \mathcal{O}(h^{p+1}),$
where
\begin{align}
  \lambda_p = \int_0^\infty \int_{\R^d} \left(\frac{1}{(p+1)!}\mathcal{L}^{p+1} - A_p \right) u(y,t) \pi(y) \, \de y \,  \de t,\label{Eq:lambdaBias}
\end{align}
$\mathcal{L}$ is the generator of the Langevin process, $A_p$ is a linear operator that depends on the choice of discretization scheme, and $u(y,t)$ solves the Kolmogorov backward equation (KBE) $\partial_t u = \mathcal{L}u$ with initial condition $u(y,0) = \phi(y)$. Both TMULA and EMRMLD are weak order $p = 1$ discretization schemes since the latter is a standard Euler-Maruyama discretization, while the former is an EM discretization of the reference Langevin process that is then mapped forward through transport map $T$. Using this result, one could show that TMULA is a better discretization scheme for constructing ergodic estimators than EMRMLD by showing that $\lambda_1$ for TMULA is smaller in magnitude than that of EMRMLD. While this is generally difficult to do, we will examine a setting where these constants $\lambda_1$ can be computed exactly. Later we will apply this analysis to a particular target $\pi$ and observable $\phi$, and show explicitly that TMULA yields a smaller value.

The following computations requires that map $S$ pushes forward $\pi$ exactly to a standard normal distribution $\eta$. Recall that we are considering a reversibly perturbed Langevin dynamics which has generator  \begin{align*}\mathcal{L} \phi(y) = \langle b(y), \nabla \phi(y)\rangle + \Tr [(\mathbf{J}_S^\top\mathbf{J}_S)^{-1} \nabla^2 \phi(y)],\end{align*} where $b(y) = (\mathbf{J}_S^\top\mathbf{J}_S)^{-1}\nabla \log \pi(y) + \nabla \cdot  [(\mathbf{J}_S^\top \mathbf{J}_S)^{-1}](y)$. Studying the asymptotic bias of TMULA can be reduced to analyzing the ergodic estimator for the test function $\phi\circ T$ for the overdamped Langevin process on $\eta(x)$. Here, $\mathcal{L}^{\text{TMULA}}\phi = \langle \tilde{b},\nabla \phi \rangle + \Delta \phi$, where $\tilde{b}(x) = \nabla \log \eta(x) $; using results from \cite{zygalakis2011existence} we have,
\begin{align*}
  A_1^{\text{TMULA}}\phi = \frac{1}{2}\tilde{b}^\top \left[ \nabla^2 \phi\right] \tilde{b} + \sum_{i = 1}^d \phi^{(3)}(e_i,e_i,\tilde{b}) + \frac{1}{2} \sum_{i ,j = 1}^d \phi^{(4)}(e_i,e_i,e_j,e_j),
\end{align*}
where $\phi^{(3)}$ and $\phi^{(4)}$ are multilinear forms of the third and fourth derivative terms and $e_i$ is the canonical basis in $\R^d$.

Likewise, for EMRMLD, one can derive that
\begin{align*}
  A_1^{\text{EMRMLD}}\phi = &\frac{1}{2}{b}^\top\left[ \nabla^2 \phi\right] {b} + \frac{1}{6} \sum_{i,j,k = 1}^d \frac{\partial^3\phi}{\partial x_i\partial x_j\partial x_k}(b_i\mathbf{Q}_{jk} + b_j \mathbf{Q}_{ik}+b_k\mathbf{Q}_{ij})\\  &+ \frac{1}{8}\sum_{i,j,k,l = 1}^d \frac{\partial^4 \phi}{\partial x_i\partial x_j \partial x_k \partial x_l} \sum_{a,b = 1}^d g_{ia}g_{jb}g_{ka}g_{lb}
\end{align*}
where $\mathbf{Q} = (\mathbf{J}_S^\top \mathbf{J}_S)^{-1}$ and $g_{ij} = (\mathbf{J}_S^{-1})_{ij}$. Since $S$ pushes $\pi$ to a standard normal distribution exactly, the solution to the KBE can be computed exactly via Hermite polynomial expansions.

In Section \ref{SS:BananaExample}, we will show an example where the constants $\lambda_1$ from (\ref{Eq:lambdaBias}) controlling the asymptotic bias of both schemes are analytically computable in this way and verifiable via simulation---demonstrating, for this example, that TMULA has smaller bias than EMRMLD.

\subsection{Convergence in Wasserstein distance}
\label{subsec:tmulaanalysis}

In this section, we visit the convergence guarantees provided for ULA for log-concave distributions. We show that TMULA converges at a geometric rate in the $2$-Wasserstein distance to the target distribution $\pi$ under proper assumptions. The main idea is that the transport map allows us to consider log-concavity of the \emph{pushforward} density $\eta \coloneqq S_\sharp \pi$ and to transfer convergence rates back to the target.
Define $U(x) \coloneqq -\log \eta(x) = - \log S_\sharp \pi(x)$. We make the following assumptions on $S$ and $U$.

\begin{assumption}
  The map $S$ is sufficiently monotone in the sense that there is a $\rho>0$ such that $ \|S(z) - S(z')\| \ge \rho \|z - z'\|$.
  \label{ass:mapmonotone}
\end{assumption}
If $S$ satisfies Assumption \ref{ass:mapmonotone}, then  $T=S^{-1}$ is globally Lipschitz with constant $1/\rho$.
\begin{assumption}
  The function $U(x) = -\log \eta(x)$ is $m$-strongly convex, with $L$-Lipschitz gradients. That is, for all $x,y\in \R^d$, there exist $m$ and $L$ such that
  \begin{align*}
    & U(y) \ge U(x) + \langle \nabla U(x), y-x \rangle + \frac{m}{2} \|x - y\|^2,
     &\|\nabla U(x) - \nabla U(y) \| \le L \|x - y\|.
  \end{align*}
  \label{ass:potentialass}
\end{assumption}
Recall the $2$-Wasserstein distance \cite{hsieh2018mirrored}
\begin{align}
  \mathcal{W}_2^2(\mu,\nu) = \inf_{W:W_\sharp \mu = \nu} \int_{\R^d} \|x - W(x) \|^2 \de \mu.
\end{align}
Denote $\eta^k$ and $\pi^k$ to be the distributions of the discrete-time process $X^k$ and $Y^k$, respectively, at time step $k$. Let the time step $h \in (0,\frac{1}{m+L})$ and $\kappa = \frac{2mL}{m+L}$. Then we have the following result, which follows naturally from Theorem 5 of \cite{durmus2019high}.

\begin{theorem}
Let $S$ and $\eta = S_\sharp \pi$ satisfy Assumptions \ref{ass:mapmonotone} and \ref{ass:potentialass}. Then
\begin{align*}
    \mathcal{W}_2^2(\pi^k,\pi) \le \frac{1}{\rho^2}\left(1- \frac{\kappa h}{2}\right)^k \left(2\|y-y^\star\|^2 + \frac{2d}{m} -C \right) + \frac{C}{\rho^2},
  \end{align*}
  where
  $  C = \frac{2L^2 d}{\kappa }[h(\kappa^{-1} + h)] \left(2+ \frac{L^2h}{m} + \frac{L^2h^2}{6} \right).$
\label{prop:wasser}
\end{theorem}

The proof is in Appendix \ref{app:wasser}. An important consequence of Theorem~\ref{prop:wasser} is that, via the link between transport and reversible perturbations established earlier, we now have \textit{non-asymptotic convergence results} for certain reversibly perturbed \textit{discretized} Langevin dynamics. These non-asymptotic bounds apply to TMULA, rather than to a direct (e.g., Euler--Maruyama) discretization of RMLD; recall by Theorem \ref{prop:TMRMLD} that these two formulations are equivalent in continuous time, but as we have argued their behavior after discretization is not necessarily the same.

The map $S$ in TMULA is a degree of freedom of the method, and can be refined; hence a natural trade-off emerges. A simple transport map might be easy to compute with, but could yield limited acceleration compared to ULA for $\pi$. A complex transport map might add additional computational overhead, but could yield faster convergence by controlling the constants $m$ and $L$ of the pushforward density $\eta$. Along these lines, another useful consequence of Theorem~\ref{prop:wasser} is that it provides guidance for the choice of $S$, which we discuss in the next remark.

\begin{remark}{(What is the optimal $\eta$?)}\label{R:OptimalMapInTheory}
We may gain some insight into what $\eta$ should look like so that maximal acceleration can be achieved according to Theorem \ref{prop:wasser}. The rate of convergence can be optimized by considering the term $r = 1 - \frac{\kappa h}{2}$ alone. Choosing $h$ to be as large as possible, the rate is $ r = 1- \frac{mL}{(m+L)^2}.$
It turns out this rate is optimized (minimizing $r$) only if $\eta$ is an \emph{isotropic Gaussian}. Fix $m>0$, we study the function $r(L) = 1- \frac{mL}{(m+L)^2}$ for $L \in [m,\infty)$. Taking the derivative with respect to $L$ yields $\frac{\partial r}{\partial L} = \frac{m(L-m)}{(m+L)^3},$
which is always positive for $L>m$. Thus, the greatest rate of convergence is attained when $L = m$, which implies that $\nabla^2 U(x) = \mathbf{I}$, and therefore implies that $\eta$ is identically a standard normal. This result supports the intuition that constructing a map $S$ such that $\eta$ is as close as possible to an isotropic Gaussian is best.
\end{remark}

Next we briefly make a useful observation on the relation of our results to target distributions that satisfy a log-Sobolev inequality (LSI). Convergence results (e.g., in Kullback–Leibler (KL) divergence) for ULA have also been developed for target distributions that satisfy LSI, e.g., see \cite{BakryBook, chewi2021analysis,erdogdu2021convergence,vempala2019rapid}. Our goal here is to explore the connection between such target distributions and the transport map $S$. We present a lemma showing that if $S$ satisfies Assumption \ref{ass:mapmonotone} and if its inverse has a globally bounded Jacobian, then if $\eta$ satisfies a LSI, $\pi$ must also satisfy a LSI.  As is known in the literature, see for example \cite{BakryBook}, LSI  implies exponential convergence in KL divergence of the law of the Langevin diffusion that has $\eta$ as its invariant distribution to the target distribution $\eta$.

We recall that a measure satisfies the log-Sobolev inequality with parameter $\alpha>0$ when, for every  smooth function $f:\mathbb{R}^{d} \to \mathbb{R}^{+}$, it holds that
\begin{align*}
\textrm{Ent}_{\eta}[f]&\leq\frac{1}{2 \alpha }\int_{\mathbb{R}^{d}}\frac{\|\nabla f(y)\|^{2}}{f(y)}\eta(y)dy,
\end{align*}
where $\textrm{Ent}_{\eta}[f]=\int_{\mathbb{R}^{d}}f(y)\log f(y)\eta(y)dy-\int_{\mathbb{R}^{d}}f(y)\eta(y)dy\log\left(\int_{\mathbb{R}^{d}}f(y)\eta(y)dy\right).$
In this case, we write that $\eta$ satisfies  $\text{LSI}(\alpha)$.

\begin{lemma}\label{L:LogSobolev}
Assume that the map $S$ satisfies Assumption \ref{ass:mapmonotone} with a differentiable inverse $T=S^{-1}$, $T^{-1}(\mathbb{R}^d)=\mathbb{R}^{d}$, and that the Jacobian of $T$ is uniformly bounded in Frobenius norm,
\begin{align}
\sup_{x\in\mathbb{R}^{d}}\sum_{i=1}^{d}\sum_{k=1}^{d}\left|\frac{\partial T_{k}}{\partial x_{i}}(x) \right|^{2}=\sup_{x\in\mathbb{R}^{d}}\|J_{T}(x)\|_{2}^{2}\leq \frac{1}{\rho^{2}}.\nonumber
\end{align}
Then, if $\eta$ satisfies  $\text{LSI}(\alpha)$, we have that the target measure $\pi = T_\sharp \eta$ satisfies  $\text{LSI}(\alpha \rho^{2})$. \end{lemma}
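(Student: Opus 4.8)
The plan is to exploit the standard fact that a log-Sobolev inequality is stable under Lipschitz pushforwards, with the LSI constant degrading by the square of the Lipschitz constant (see, e.g., \cite{BakryBook}). The Lipschitz map in question is $T = S^{-1}$, which by Assumption \ref{ass:mapmonotone} is globally $1/\rho$-Lipschitz; the uniform Frobenius bound $\|J_T(x)\|_2^2 \le 1/\rho^2$ in the hypothesis is the pointwise (differential) version of this, which we feed directly into the Dirichlet-form side of the inequality.

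Concretely, fix a smooth $f:\mathbb{R}^d \to \mathbb{R}^+$. First, because $\pi = T_\sharp \eta$, for every integrable $g$ one has $\int g\, d\pi = \int (g\circ T)\, d\eta$; applying this to $g=f$ and $g = f\log f$ shows that entropy is invariant under the pushforward, $\textrm{Ent}_\pi[f] = \textrm{Ent}_\eta[f\circ T]$. (The condition $T^{-1}(\mathbb{R}^d)=\mathbb{R}^d$ ensures $S$ is onto, hence $T$ is defined on all of $\mathbb{R}^d$ and $f\circ T$ makes sense everywhere.) Second, apply $\text{LSI}(\alpha)$ for $\eta$ to $f\circ T$. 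By the chain rule $\nabla(f\circ T)(x) = J_T(x)^\top \nabla f(T(x))$, so, bounding the operator norm by the Frobenius norm and using the hypothesis,
\[
  \|\nabla(f\circ T)(x)\|^2 \le \|J_T(x)\|_2^2\,\|\nabla f(T(x))\|^2 \le \frac{1}{\rho^2}\,\|\nabla f(T(x))\|^2 .
\]
Plugging this into the log-Sobolev inequality for $\eta$ and changing variables once more through $\pi = T_\sharp \eta$,
\[
  \textrm{Ent}_\eta[f\circ T] \le \frac{1}{2\alpha}\int_{\mathbb{R}^d}\frac{\|\nabla(f\circ T)(x)\|^2}{(f\circ T)(x)}\eta(x)\,dx \le \frac{1}{2\alpha\rho^2}\int_{\mathbb{R}^d}\frac{\|\nabla f(y)\|^2}{f(y)}\pi(y)\,dy .
\]
Combining the two displays with the entropy identity yields $\textrm{Ent}_\pi[f] \le \frac{1}{2(\alpha\rho^2)}\int \|\nabla f\|^2/f\, d\pi$, i.e., $\pi$ satisfies $\text{LSI}(\alpha\rho^2)$, as claimed.

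The one place requiring care is the regularity of the test function $f\circ T$: since $T$ is only assumed differentiable (not $C^\infty$), $f\circ T$ need not be smooth, so one cannot directly invoke the LSI for $\eta$ if that inequality is stated only for smooth $f$. I would resolve this in the standard way---either by recalling that $\text{LSI}(\alpha)$ automatically extends to all locally Lipschitz test functions (a routine truncation/mollification argument on the target side), or by mollifying $T$ itself, applying the inequality to the smoothed compositions, and passing to the limit via dominated convergence, with the uniform Jacobian bound supplying the needed integrable domination. This is the main technical obstacle, but it is routine; the algebraic heart of the proof---the entropy pushforward identity together with the chain-rule estimate on the Dirichlet form---is immediate.
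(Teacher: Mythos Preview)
Your proof is correct and follows essentially the same approach as the paper: identify $\textrm{Ent}_\pi[f]=\textrm{Ent}_\eta[f\circ T]$ via the pushforward, apply $\text{LSI}(\alpha)$ for $\eta$ to $f\circ T$, use the chain rule together with the Frobenius-norm bound $\|J_T\|_2^2\le 1/\rho^2$ to control the Dirichlet form, and change variables back. Your remark on the regularity of $f\circ T$ is an extra degree of care that the paper's proof does not address.
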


Lemma \ref{L:LogSobolev} shows that there is no free lunch regarding to exponential convergence to equilibrium. By this we mean that even if we produce an $\eta$ that satisfies a log-Sobolev inequality, the original target measure must also satisfy a log-Sobolev inequality if the map $T=S^{-1}$ has all of its partial derivatives uniformly bounded. If this uniform bound on the Jacobian of $T$ does not hold, however, then the original target measure $\pi$ might indeed not satisfy the log-Sobolev inequality. As we shall see in Remark \ref{R:AdvantageUsingTM} and in Section \ref{sec:numerical}, however, there are advantages of employing a transport map even if $\pi$ satisfies an LSI.

\begin{remark}\label{R:AdvantageUsingTM}
Here we argue that even if the target $\pi$ has nice convexity properties, transforming it with a map $S$ via Theorem \ref{prop:wasser} can improve the convergence rate. Consider a two-dimensional example where the target density is Gaussian and hence strongly log-concave. In particular, let $\pi = \mathcal{N}(0, \text{diag}(1/m, 1/L) )$ for  $m < L$. Then $U(x)  = -\log \pi(x)$ is $m$-strongly convex and $\nabla U$ is $L$-Lipschitz. The map $S$ that transforms $\pi$ to a standard Gaussian is $S(x_1, x_2) = ( \sqrt{m} x_1, \sqrt{L} x_2)$. This map $S$ satisfies Assumption \ref{ass:mapmonotone} with $\rho = \sqrt{m}$.

For the original distribution, since $L > m$, the rate $r = 1 - (\kappa h) / 2$ is not optimal. For the transformed distribution $\eta$ we have $L = m = 1$, and thus we can have $\kappa = 1$ and for $h$ at its maximum value $(h = 1/2)$ we get $r = 3/4$, which is the minimal attainable value according to Remark \ref{R:OptimalMapInTheory}. By Theorem \ref{prop:wasser}, we pay a constant penalty of $1/\rho^2 = 1/m$ multiplying $r^k$. But it is better to make $r$ small in exchange for a constant factor penalty, since doing so maximizes the rate of geometric convergence.
\end{remark}

\section{Numerical examples}
\label{sec:numerical}

We empirically study TMULA samplers and compare them to other Langevin-type samplers, such as ULA and standard discretizations of RMLD.

We first describe how sample quality is measured in the following numerical examples. We compute the bias, variance, asymptotic variance, and mean-squared errors of ergodic estimators for some chosen test functions. Given a test function $\phi(Y): \R^d \to \R$, let $\bar{\phi}_K = \frac{1}{K} \sum_{k = 0}^{K-1} \phi(Y_k)$ be the estimator of $\Ex_\pi[\phi(Y)]$. Here, $\{Y_k\}$ are produced by a single chain of a discretized Langevin process. The asymptotic variance is computed through the method of batch means \cite{asmussen2007stochastic,zhang2022geometry}. The MSE is estimated by computing 100 independent trajectories and then evaluating the relevant observables for each of those trajectories. The true value is either computed analytically, if available, or through a Monte Carlo estimate with $10^8$ samples.
In addition, we also measure sample quality by computing the kernelized Stein discrepancy (KSD).  The KSD is a computable expression that can approximate certain integral
probability metrics for a certain class of functions defined through the action of the Stein
operator on a reproducing kernel Hilbert space; see \cite{GorhamKSD,zhang2022geometry} for details on the implementation of KSD. For the KSD computations, we simulate 100 independent trajectories of a particular system and then compute the KSD using 10000 sample points along those trajectories.

Each example presented below has a different purpose. In Section \ref{SS:BananaExample} we present a simple non-Gaussian example that allows us to explicitly compare the asymptotic bias of an Euler-Maruyama discretization of RMLD with that of TMULA, in a setting where they are equivalent in continuous time, using the theory of Section~\ref{SS:BiasApproximation} and via simulation. In Section \ref{SS:FunnelDensity} we present a funnel distribution that arises from a Bayesian inference problem. The goal of this example is to compare multiple algorithms: ULA, RMLD with standard metrics in previous literature, RMLD with a transport-derived metric, TMULA, and TMULA with irreversible perturbations. In Section \ref{SS:RosenbrockDistribution} we present a more complicated, higher-dimensional example, a ``hybrid Rosenbrock'' distribution. We show in this example that even though ULA is not able to produce samples from the narrow region of the distribution, TMULA can do so when implemented with a split-step implicit Langevin algorithm. Finally, in Section \ref{SS:MultimodalDistribution}, we consider a multimodal distribution. Our goal here is to demonstrate some of numerical challenges that arise when using transport to reshape and sample from multimodal targets.

\subsection{Evaluating discretizations on a simple Banana distribution}\label{SS:BananaExample}
Consider a banana distribution with, up to an additive constant, $\log \pi(y) = -y_1^2/s^2 - (y_2 + by_1^2-100b)^2$. For this example, we can write explicitly a transport map $S$ that pushes the distribution to a standard Gaussian:
\begin{align*}
  S(y_1,y_2) = \begin{bmatrix}
    y_1/s \\ y_2 + b y_1^2 - 100b
  \end{bmatrix}.
\end{align*}
We choose $s = 4$ and $b = 0.01$, and consider the observable $\phi(y_1,y_2) = y_1^2 + y_1 + y_2^2 + y_2$. Using the symbolic algebra toolbox in MATLAB, we find that $\lambda_1^{\text{TMULA}} = -0.62$ while $\lambda_1^{\text{EMRMLD}} = 34.69$, which clearly shows that TMULA results in a better ergodic estimator for this test function. We confirm these results numerically by simulating several chains with total simulation time $T = 10^6$, for a range of step sizes $h$. We show the empirical estimates of the asymptotic bias in Figure \ref{fig:constantslambda}. The slopes of the error plots, which correspond to $\lambda_1$ values in \eqref{Eq:lambdaBias}, are estimated by computing the term $e(\phi,h)/h$, yielding $\lambda_1^{\text{EMRMLD}}\approx 35.26$ and $\lambda_1^{\text{TMULA}} \approx -0.6345$. These values very closely match our theoretical predictions.

\begin{figure}
  \centering
  \includegraphics[width = 0.66\textwidth]{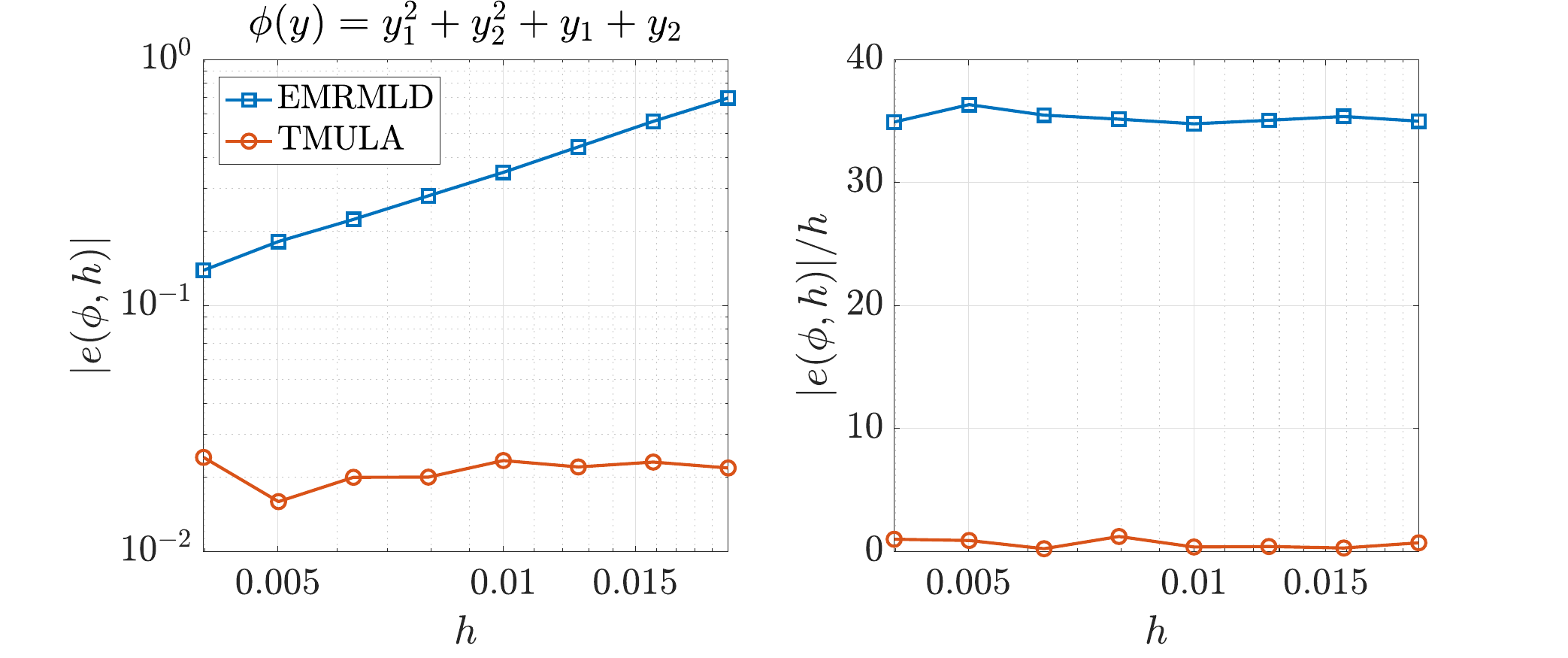}
  \caption{Empirical estimates of $|e(\phi,h)|$ and $e(\phi,h)/h$ for TMULA and EMRMLD as a function of $h$. The constants $\lambda_1$ from \eqref{Eq:lambdaBias} are estimated by averaging over $e(\phi,h)/h$. The empirical estimates match well with the theoretical values of $\lambda_1$ reported in Section \ref{SS:BananaExample}.   }
  \label{fig:constantslambda}
\end{figure}

\subsection{A funnel density: parameters of a normal distribution}\label{SS:FunnelDensity}
We study a Bayesian inference problem similar to the one studied in (\cite{girolami2011riemann}, Section 5). Suppose $X\sim N(\mu,\sigma^{2})$. Given a dataset $\mathbf{X} = \{X_i\}_{i = 1}^N$, we infer the mean and variance.  We assume that the prior on $\mu$ is a normal distribution with $\mu = 0$ and $\sigma^2 = 3$, and that the prior on $\sigma$ is a Gamma distribution with shape parameter $\alpha = 0.75$ and rate parameter $\beta = 0.5$. Furthermore, to make the resulting posterior distribution have infinite support, we consider the variable transformation $\gamma = \log \sigma$. Up to an additive constant, the log-posterior distribution is
\begin{align*}
  \log\pi(\mu,\gamma|\mathbf{X}) = -N\gamma - \frac{1}{2}e^{-2\gamma} \sum_{i = 1}^N (X_i - \mu)^2 - \mu^2/6 + \alpha\gamma - \beta e^{\gamma}.
\end{align*}
One interesting feature about this posterior distribution is that there is a narrow funnel-shaped region that is difficult for ULA to sample from unless a very small step size is used.

We consider five different Langevin dynamics that all have the posterior distribution as their invariant distribution, and compare their sampling performance. They are denoted \texttt{ULA}, \texttt{RMLD}, \texttt{TMULA}, \texttt{TMRMLD}, and \texttt{TMRMLD + Irr}. \texttt{ULA} is the discretization of the standard Langevin dynamics. \texttt{RMLD} is a Riemannian manifold Langevin dynamics where the metric is chosen according to the one in \cite{girolami2011riemann}. In \cite{girolami2011riemann} the authors choose the metric to be the sum of the expected Fisher information matrix and the negative Hessian of the log-prior.
\begin{align*}
  \mathbf{B}(\mu,\gamma) = \begin{bmatrix}
    \frac{1}{2N\beta+e^\gamma} & 0 \\ 0 & \frac{1}{Ne^{-2\gamma} + 1/3}
  \end{bmatrix}.
\end{align*}

Using samples generated by one trajectory of ULA, a monotone triangular $S$ that pushes the target to an approximate normal distribution is learned using \texttt{ATM} with 20000 sample points and where $f$ is trained over a total order Hermite polynomial basis of order 3 that is rectified so that the Jacobian is positive definite (see Section \ref{subsec:transportmaps}) \cite{baptista2020adaptive}. The sample points are generated by simulating a single ULA chain with a step size of $10^{-6}$. Note that this step size is much smaller than the simulations we run to evaluate each system. This is so that ULA produces training samples in the narrow funnel of the distribution.  In \texttt{TMULA} we simulate trajectories of ULA on the pushforward density $S_\sharp \pi$ and map each trajectory into trajectories on $\pi$ through $S^{-1}$. \texttt{TMRMLD} is a RMLD where the metric is chosen according to $\mathbf{B}(\mu,\gamma)^{-1} = \mathbf{J}_S^\top\mathbf{J}_S(\mu,\gamma)$. While the analytical map of this posterior distribution to a standard normal is not known, we show that a metric that arises from the construction of an \emph{approximate} triangular transport map can still result in better sample quality. In continuous time, Theorem \ref{prop:TMRMLD} shows that \texttt{TMULA} and \texttt{TMRMLD} are equivalent, but our analysis in Section \ref{subsec:tmrmld} suggests that these two approaches may yield different results when discretized. We empirically explore their differences. In \texttt{TMULA + Irr} we also consider the discretization of the system in Theorem \ref{prop:tmgiirr} with the same discretization as \texttt{TMULA}.

Figure \ref{fig:funneldensities} shows how the map $S$ can be used to sample from the posterior distribution $\pi$. Observe that when applied directly to a standard normal distribution, the map produces an approximate target distribution that is visually quite different from the target distribution. In contrast, by performing Langevin on the pushforward of the map $S$ on the target distribution, we can produce samples from the posterior. In Figure \ref{fig:ulavstmula} we show sample points produced by \texttt{ULA} and \texttt{TMULA} with the same step size $h = 8\times 10^{-3}$. Notice that \texttt{ULA} is unable to produce samples from the narrow region of this distribution.

We plot the KSD for four of the systems in Figure \ref{fig:funnelksd}. We do not report the KSD for the \texttt{ULA} system as it does not produce samples from the narrow region of the posterior and leads to poor estimates of the KSD. Notice that all the systems with the map-defined metric outperform the metric defined in \cite{girolami2011riemann}, and that the \texttt{TMULA} discretization outperforms the direct Euler-Maruyama discretization of \texttt{TMRMLD} (EMRMLD). These findings are further supported by Table \ref{table:hybridrosen} and Figure \ref{fig:funnelmse}, which show the asymptotic variance and MSE convergence plots for test functions $\phi_1(\mu,\gamma) = \exp(\gamma)$, $\phi_2(\mu,\gamma) = \gamma + \mu$, and $\phi_3(\mu,\gamma) = \gamma^2 + \mu^2$.

  \begin{figure}[h]
    \centering
\includegraphics[width = 0.6\textwidth]{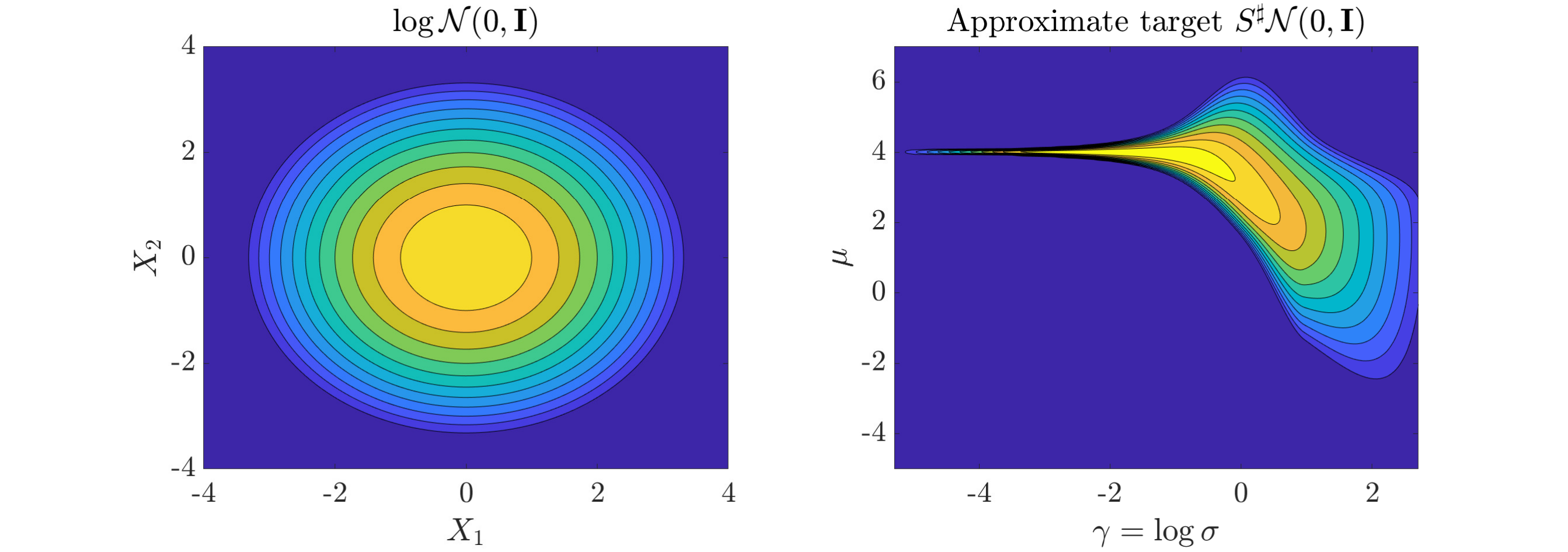}
    \includegraphics[width = 0.6\textwidth]{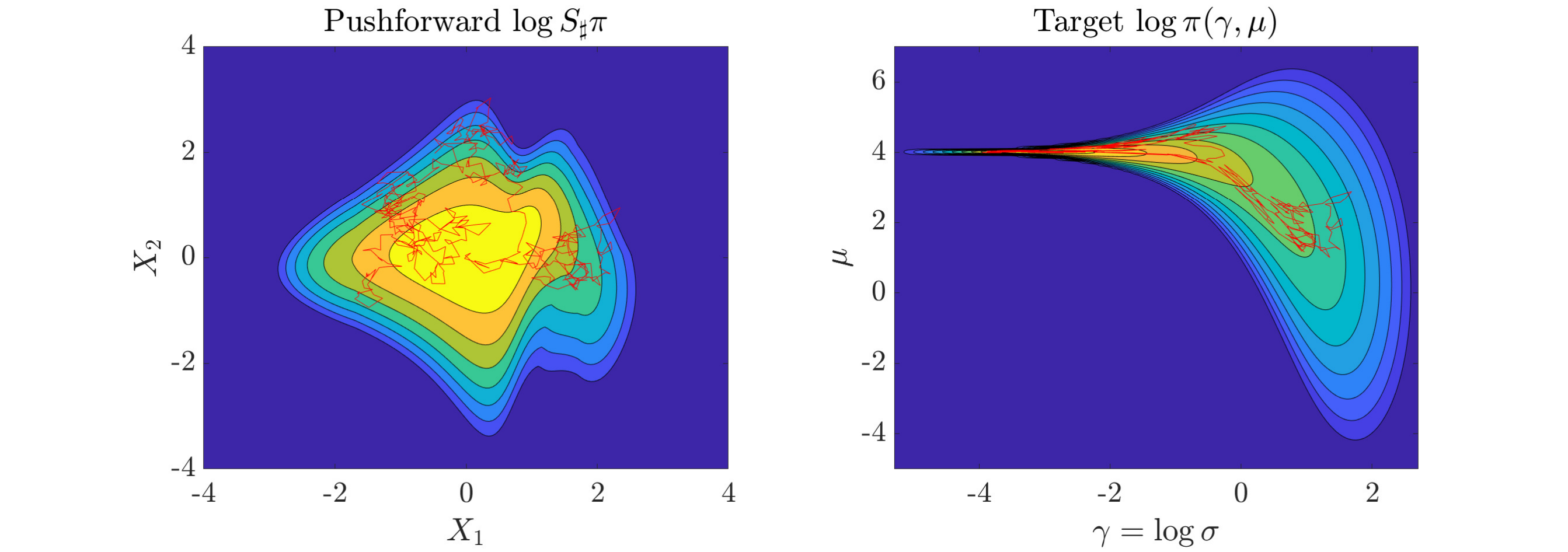}
    \caption{Posterior for parameters of a normal distribution example. Top right figure shows the pullback of normal distribution through map $S$. Figures in the bottom row show the pushforward of the target $\pi$ through map $S$ and the true distribution $\pi$. Red curves show the Langevin trajectory on the pushforward distribution and on the target distribution. }
    \label{fig:funneldensities}
  \end{figure}

  \begin{figure}
    \centering
    \includegraphics [width = \textwidth]
    {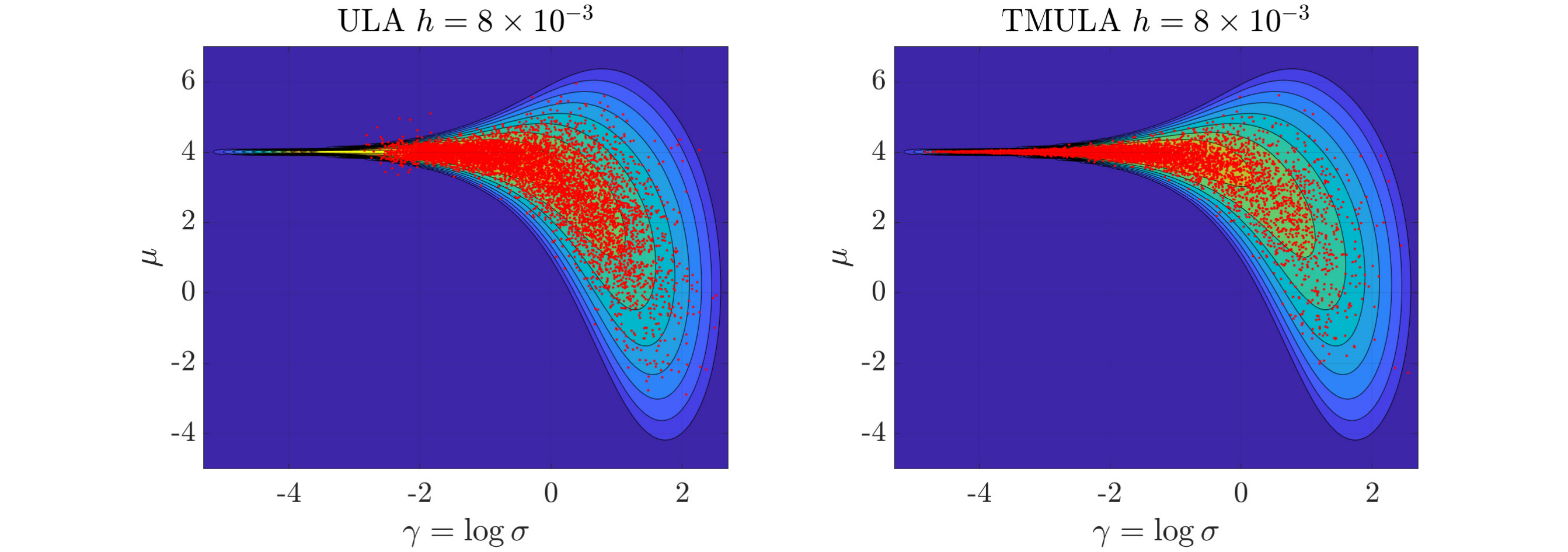}
    \caption{Posterior for parameters of a normal distribution. Left figure shows the samples produced by ULA, right figure shows produced by TMULA. For this step size, note that ULA is unable to produce samples from the narrow region of the distribution.  }
    \label{fig:ulavstmula}
  \end{figure}

  \begin{figure}
    \centering
    \includegraphics[width = 0.5\textwidth]{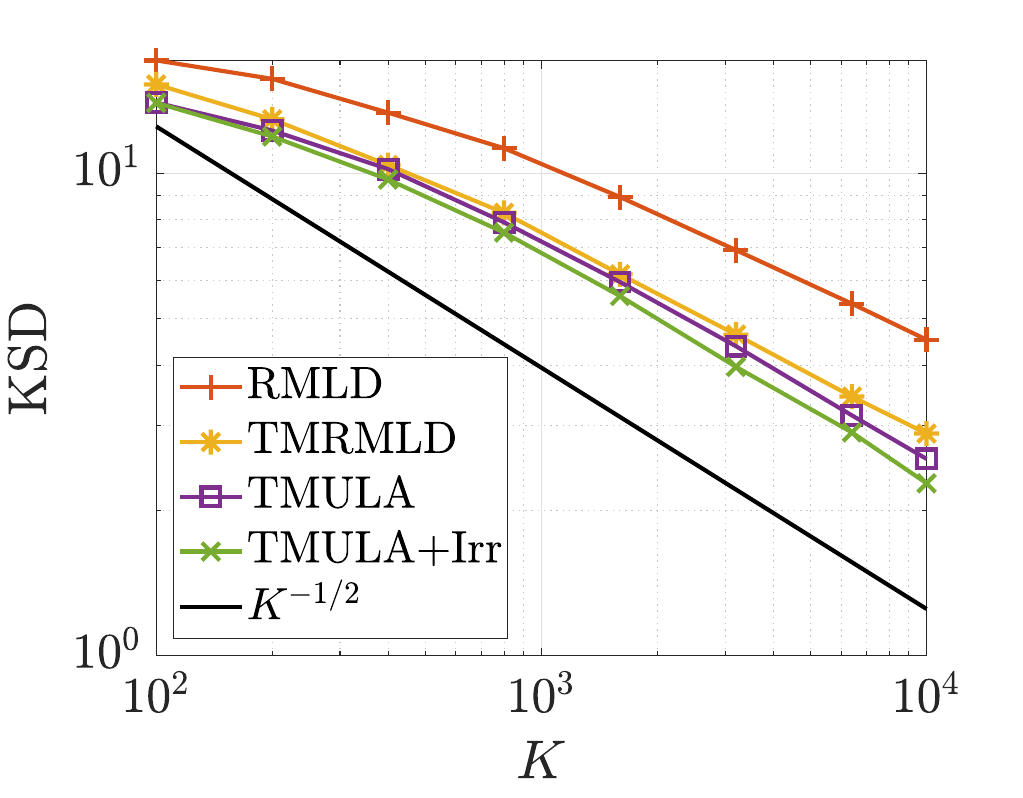}
    \caption{KSD for the funnel distribution. Note that we do not plot the KSD of the \texttt{ULA} process as it is unable to produce samples from the narrow region of the posterior and will produce poor estimates of the KSD. }
    \label{fig:funnelksd}
  \end{figure}

  \begin{figure}
    \centering
\includegraphics[width = \textwidth]{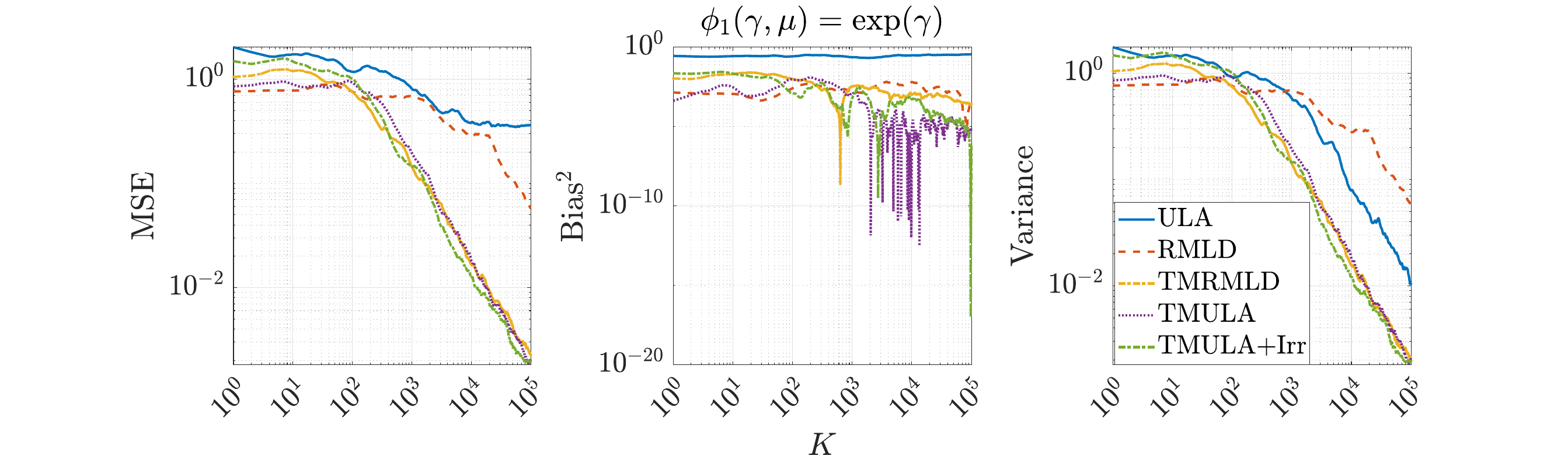}
    \includegraphics[width = \textwidth]{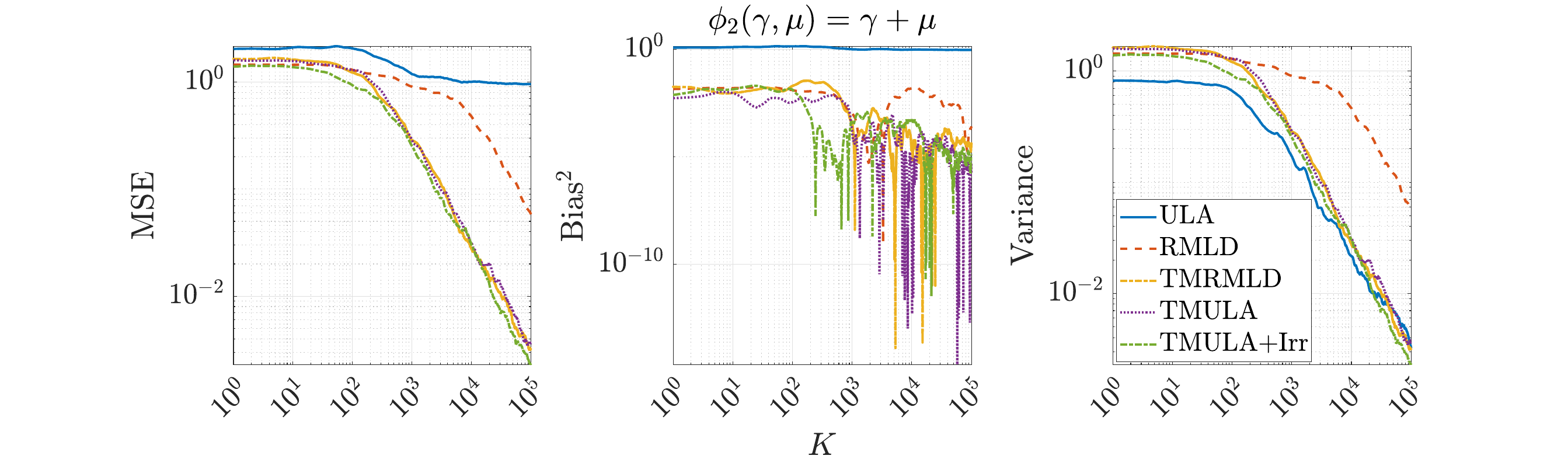}
    \includegraphics[width = \textwidth]{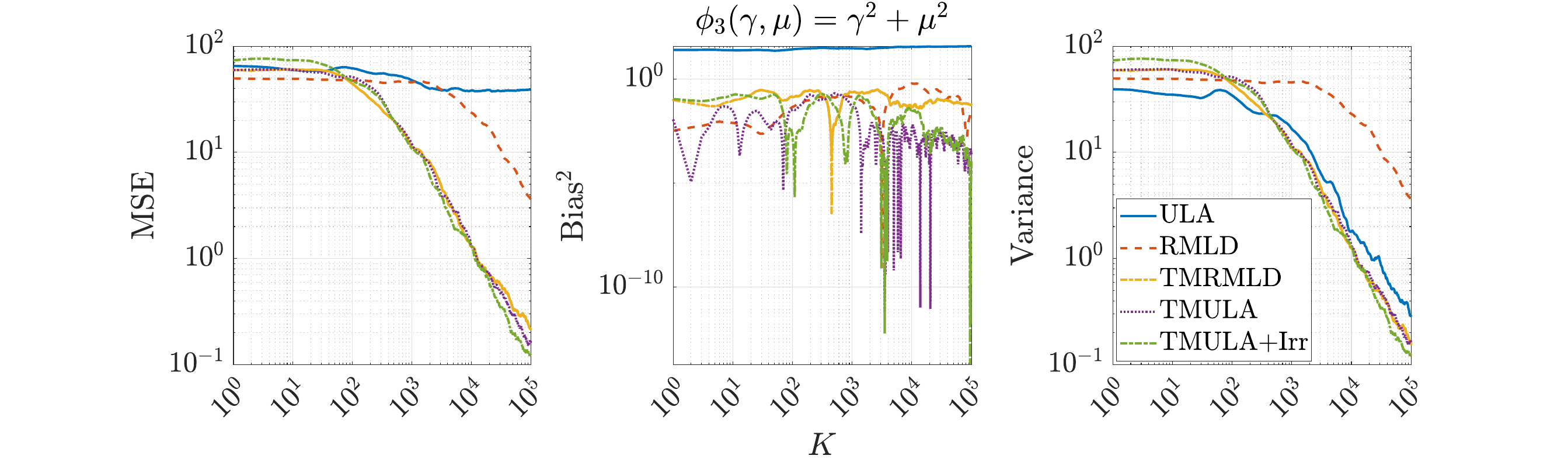}
    \caption{MSE, bias, and variance for the funnel distribution.}
    \label{fig:funnelmse}
  \end{figure}

  \begin{table}[H]
    \centering
    {\footnotesize
    \begin{tabular}{|l|l|l|l|l|l|l|}
    \hline
                           & $\Ex[\text{AVar}_{\phi_1}]$    & $\text{Std}[\text{AVar}_{\phi_1}]$ & $\Ex[\text{AVar}_{\phi_2}]$    & $\text{Std}[\text{AVar}_{\phi_2}]$  & $\Ex[\text{AVar}_{\phi_3}]$    & $\text{Std}[\text{AVar}_{\phi_3}]$ \\ \hline
    \texttt{ULA}                    &8.759 & 1.797 &1.957 & 0.4774 &195.4 &35.30  \\ \hline
    \texttt{RMLD}                    &25.46 &7.550 & 28.82 & 2.860   &1558 &184.8   \\ \hline
    \texttt{TMRMLD}       & 1.344 &$\mathbf{0.2057}$ & 2.655 & 0.3705             & 108.7 & 15.48  \\ \hline
    \texttt{TMULA}  &1.444 &0.2061 & 2.480 & 0.3475   & 114.8  & 14.00   \\ \hline
    \texttt{TMULA + Irr} \hspace{-8pt}  &$\mathbf{ 1.243}$ & 0.2131 & $\mathbf{1.961}$ & $\mathbf{0.2851}$           & $\mathbf{92.72}$ & $\mathbf{12.89}$     \\ \hline

    \end{tabular}
    \caption{Asymptotic variance estimates for the funnel distribution. } \label{table:funnel}
    }

  \end{table}

\subsection{Hybrid Rosenbrock distribution}\label{SS:RosenbrockDistribution}

We demonstrate our methodology on a seven-dimensional hybrid Rosenbrock distribution. This distribution was originally constructed as a generalization of the 2D Rosenbrock distribution for the purposes of testing sampling schemes \cite{pagani2022n}. The density is defined as
\begin{align*}
 \pi(\textbf{y}) \propto \exp\left\{-a(y_1-\mu)^2 - \sum_{j = 1}^{n_2} \sum_{i = 2}^{n_1} b_{ji} (y_{j,i} - y_{j,i-1}^2)^2 \right\}
\end{align*}
where $\mu, y_{j,i} \in \R$, $a,b_{j,i}$ are strictly positive, and the dimension of the problem is $d = (n_1 - 1)n_2 +1$. We choose $n_1 = 4$, $n_2 = 2$, $\mu = 1$, $a = 30$, and $b_{j,i} = 20$ for all $(j,i)$.
It is important to note that the discrete-time stochastic process that ULA produces for this density is transient, as the drift of the resulting SDE has a cubic term. This phenomenon is well-documented and so ULA cannot be used to reliably produce samples \cite{casella2011stability}. While a perfect transport map is able to alleviate this problem, an approximate map may not be able to eliminate all the higher order terms, which may cause the stochastic process generated by TMULA to be transient. To solve this problem, for this example we use the split-step implicit Langevin algorithm, which is an implicit algorithm. A deterministic implicit Euler step is first performed before noise from the discretized Brownian motion is added. See \cite{mattingly2002ergodicity} for details about this method. In Equation \ref{eq:splitstep} we show how we incorporate the implicit scheme with \texttt{TMULA}. We refer to this resulting implicit scheme as TMUILA.

  \begin{align}
  \begin{dcases}
    S(Y^*) = S(Y^k) + h \mathbf{J}^\top_S(Y^*) ^{-1}\left[\nabla_Y \log \pi(Y^*) - \sum_{i = 1}^d \left(\frac{\partial S_i}{\partial y_i}(Y^*) \right)^{-1} H_i(Y^*) \right] \\
    X_{k+1} = S(Y^*) + \sqrt{2h} \xi^{k+1} \\
    Y_{k+1} = T(X_{k+1}),
  \end{dcases} \label{eq:splitstep}
\end{align}
where $H_i(Y^k) = \left[\frac{\partial^2 S_i}{\partial y_1 \partial y_i}, \cdots,  \frac{\partial ^2S_i}{\partial y_d\partial y_i} \right]^\top$,
where $\xi^{k+1}\sim \mathcal{N}(0,\mathbf{I})$.

A transport map is learned with 2500 training samples. In this example, the training samples can be obtained directly since the true map that normalizes the hybrid Rosenbrock distribution is known. We plot the training samples in left of Figure \ref{fig:rosentrain}. The function $f$ in the transport map is optimized over the span of Hermite polynomials with total order 2 that is then rectified so that its Jacobian is always positive definite.

We compare implicit unadjusted Langevin (UILA) with the implicit version of TMULA (TMUILA) both with step size $h = 0.01$. In right of  Figure 
\ref{fig:rosentrain}, we compare the KSD of the two systems. Notice that the KSD of UILA plateaus quickly, which implies that the samples are quite biased while TMUILA does not exhibit this phenomenon in our experiments. This implies that the use of the transport map allows us to use larger step sizes when simulating TMUILA. These results are further supported by the asymptotic variance estimates in Table \ref{table:funnel} and the MSE plots in Figure \ref{fig:rosenmse} test functions $\phi_1(Y) = \sum_{ i = 1}^7 Y^i$ and $\phi_s(Y) = \sum_{i = 1}^7  (Y^i)^2$.

\begin{figure}[h]
  \centering
  \includegraphics[width = 0.45\textwidth]{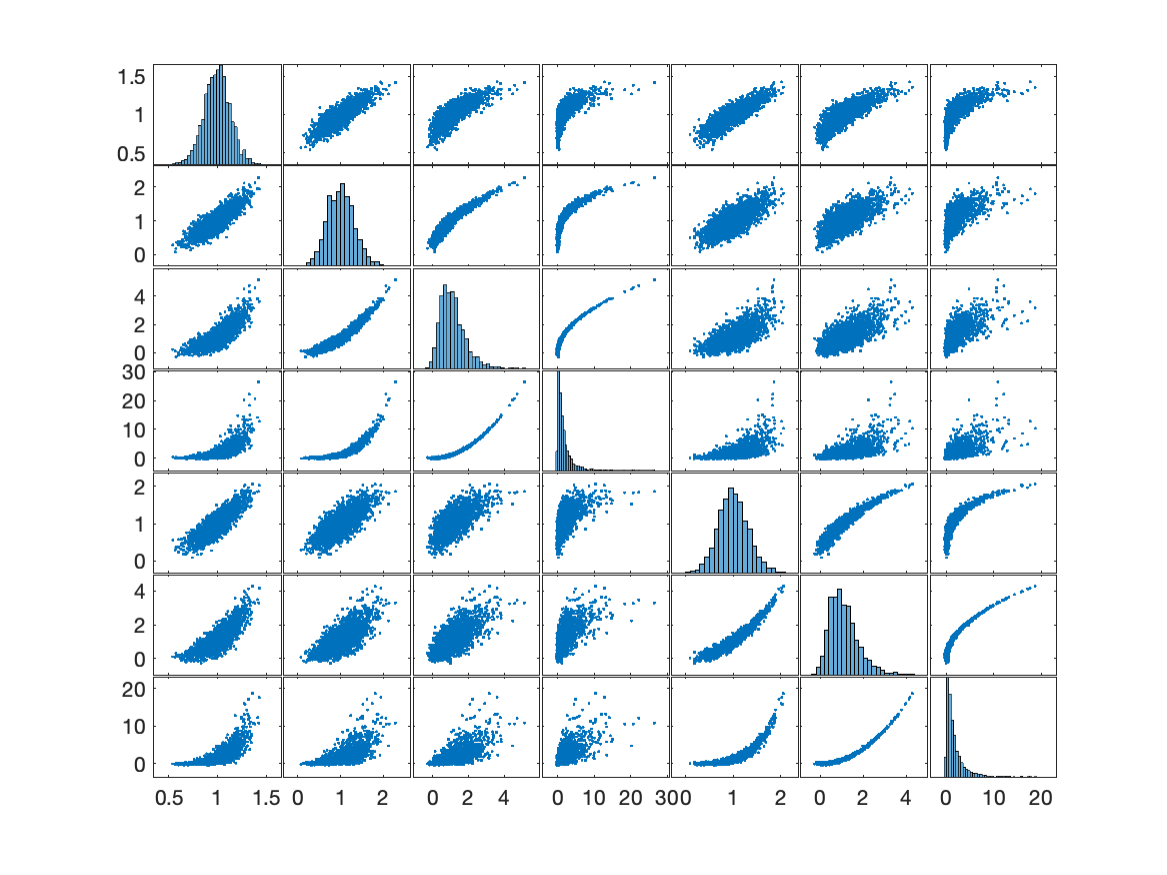}
  \includegraphics[width = 0.45\textwidth]{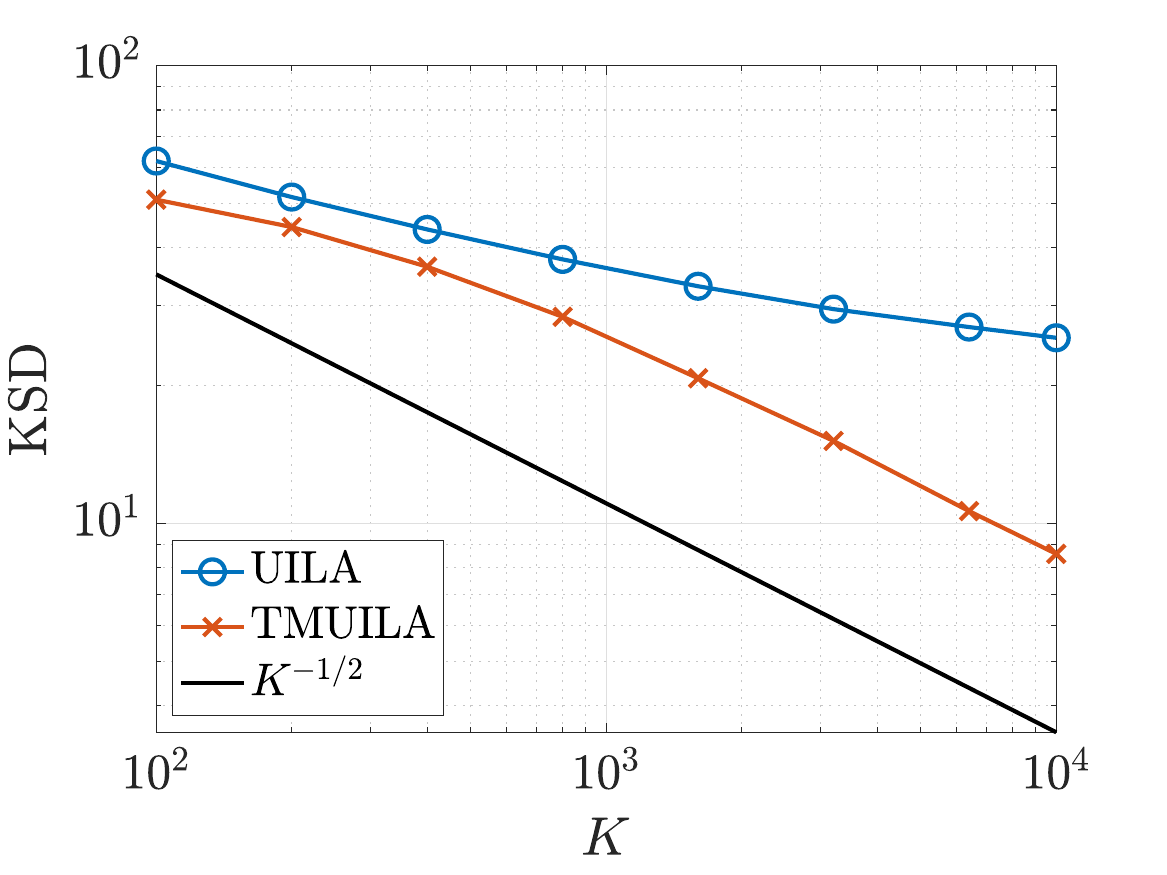}
  \caption{Left: Training samples for the hybrid Rosenbrock example. Right: Kernelized stein discrepancy for the hybrid Rosenbrock distribution.} \label{fig:rosentrain}
\end{figure}

\begin{figure}[h]
  \centering
  \includegraphics[width = \textwidth]{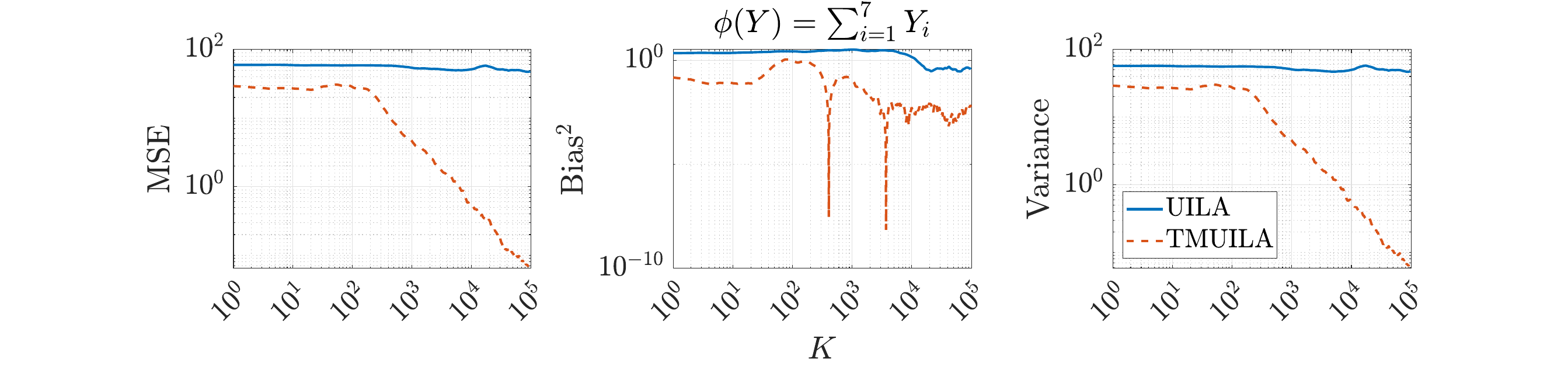}
  \includegraphics[width = \textwidth]{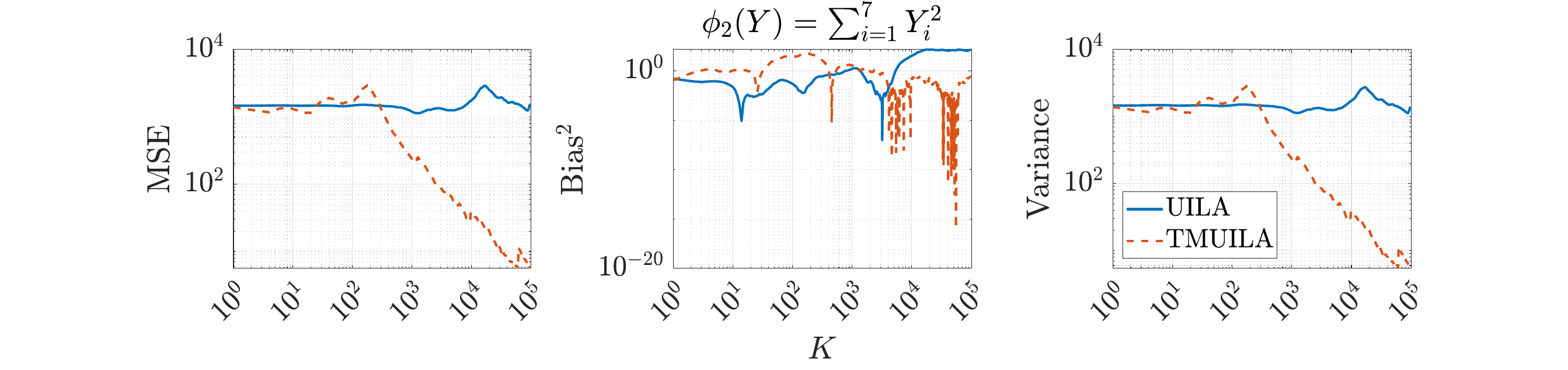}
  \caption{MSE for hybrid Rosenbrock distribution.} \label{fig:rosenmse}
\end{figure}

\begin{table}[H]
  \centering
  \begin{tabular}{|l|l|l|l|l|}
  \hline
                         & $\Ex[\text{AVar}_{\phi_1}]$    & $\text{Std}[\text{AVar}_{\phi_1}]$ & $\Ex[\text{AVar}_{\phi_2}]$    & $\text{Std}[\text{AVar}_{\phi_2}]$  \\ \hline
  \texttt{UILA}                    &6762 & 2663 &$6.957\times10^6$ &$5.185\times 10^6$   \\ \hline
  \texttt{TMUILA}  &$\mathbf{65.03}$ &$\mathbf{28.54}$ &$\mathbf{6506}$ & $\mathbf{1284} $      \\ \hline
  \end{tabular}
  \caption{Asymptotic variance estimates for the hybrid Rosenbrock distribution. }
  \label{table:hybridrosen}
  \end{table}

\subsection{Multimodal distributions}\label{SS:MultimodalDistribution}

We consider a multimodal example. Efficient Langevin sampling of not strongly log-concave distributions is an active area of research. In contrast to the previous examples, we only take note of some qualitative features and challenges of the method for sampling from distributions with multiple modes. The target distribution is a mixture of four Gaussians with means located at $(-4,-4),(4,-4),(-4,4),(4,4)$ all with identity covariance matrices. The weights for each Gaussian are $w_1 = 0.337, w_2 = 0.050, w_3 = 0.284, w_4 = 0.328$. In Figure \ref{fig:multimodaltarget} we plot the training samples and the density of the target.

Two maps are learned adaptively with \texttt{ATM} using $N = 200$ and $N = 2000$ samples \cite{baptista2022atm, baptista2020adaptive}. When applying TMULA with the learned maps, we often observed that the method would still get stuck in certain modes of the target distribution despite the fact that the learned map would push the samples to what appears to be a unimodal distribution. In Figure \ref{fig:multimodalpushforward}, we show the pushforward samples through the learned maps with $N = 200$ and $N = 2000$ training samples. Notice that while the scatterplot of samples pushed forward through the learned maps looks quite normal and unimodal, the pushforward \textit{densities} have separatrices that make it difficult for TMULA to transition between the multiple modes. More samples can reduce these separations; the $N =2000$ case shows reduced intensity of the separatrices.

We conjecture that this phenomenon further demonstrates the difficulty of sampling from multimodal distributions with Langevin dynamics. While the transport map can normalize the shape of the distribution, it has challenges in resolving the transitions between modes. We note that this phenomenon is very sensitive to the errors of the map. In Figure \ref{fig:multimodalmaps}, we plot the exact and learned maps (with $N = 200$). Notice that the learned map approximates the true normalizing map quite well, even though it still results in a pushforward density that is difficult to sample using Langevin dynamics. We conjecture that separatrices follow from the fact that, when learning maps from samples, it is intrinsically challenging to capture the correct slope (or the  Jacobian determinant) of $S$ in regions of low density $\pi$, because there is necessarily a lack of sample information in such regions. Further research into this phenomenon and techniques for learning maps that avoid these separatrices are ongoing. Our focus here is on how learned maps interact with ULA sampling, and we leave further investigation of how best to obtain maps for multi-modal targets to other work.

\begin{figure}[h!]
  \centering
  \includegraphics[width = 0.75\textwidth]{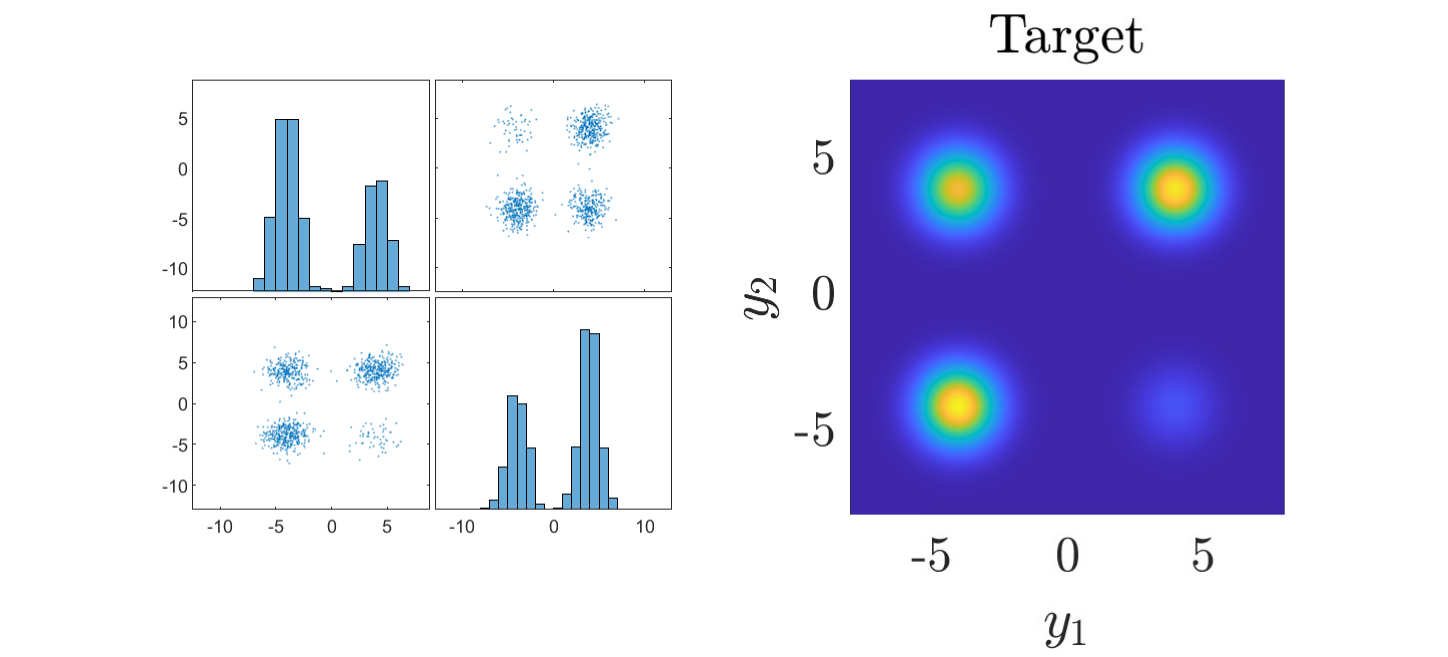}
  \caption{Gaussian mixture target distribution. Left figure shows samples, right figure shows target density. }
  \label{fig:multimodaltarget}
\end{figure}
\begin{figure}[h!]
  \centering
 \subfloat[$S$ is learned from $N = 200$ samples]{
  \includegraphics[width = 0.49\textwidth]{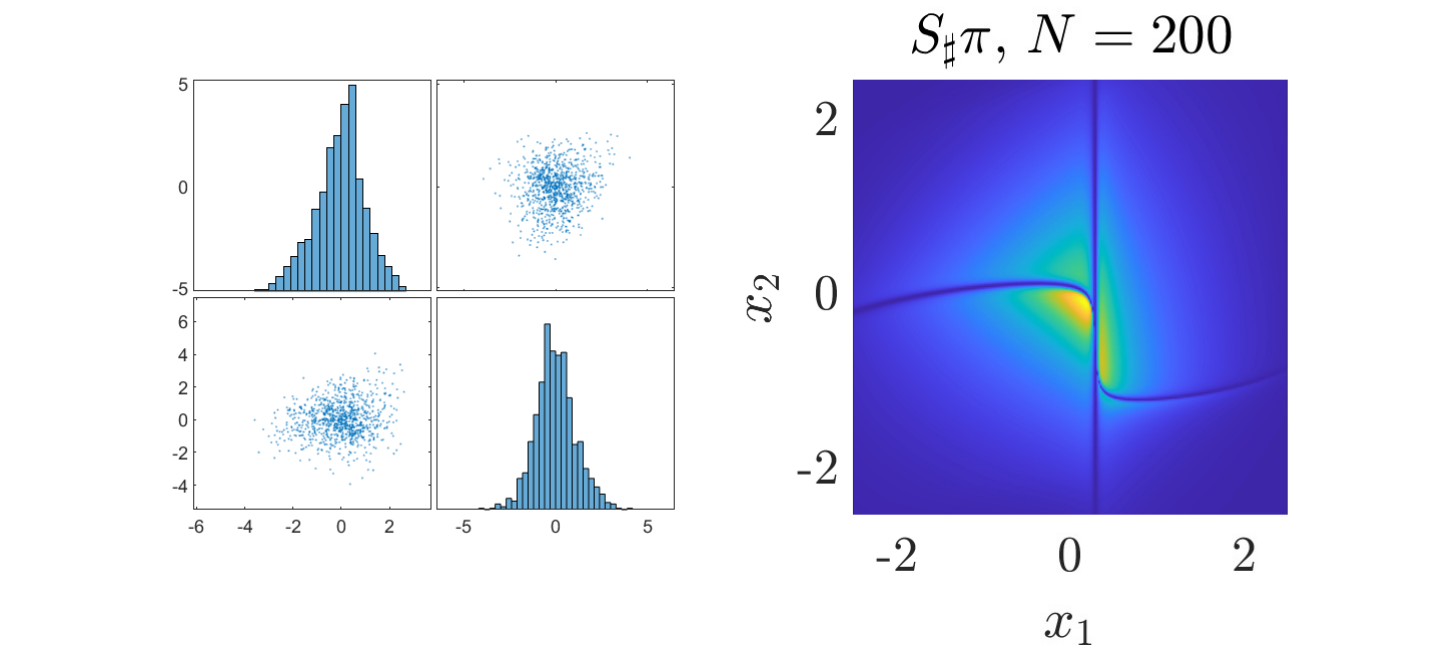}
 }
\subfloat[$S$ is learned from $N = 2000$ samples]{
\includegraphics[width = 0.49\textwidth]{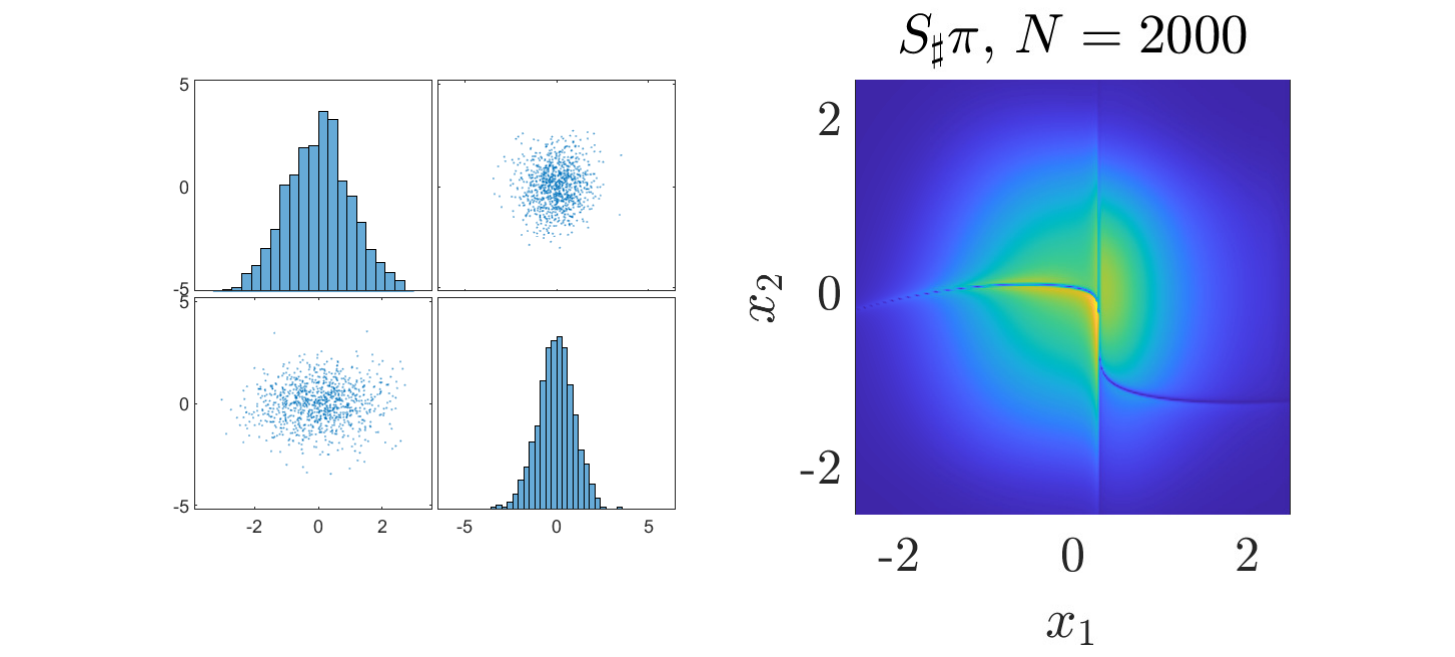}
}
  \caption{Mixture of Gaussians. Notice that while the pushforward samples in each figure look quite normal, the pushforward density $S_\sharp \pi$ (computed exactly using the change-of-variables formula) shows boundaries separating the Gaussian into four regions.   }
  \label{fig:multimodalpushforward}
\end{figure}

\begin{figure}[h!]
  \centering
  \includegraphics[width = \textwidth]{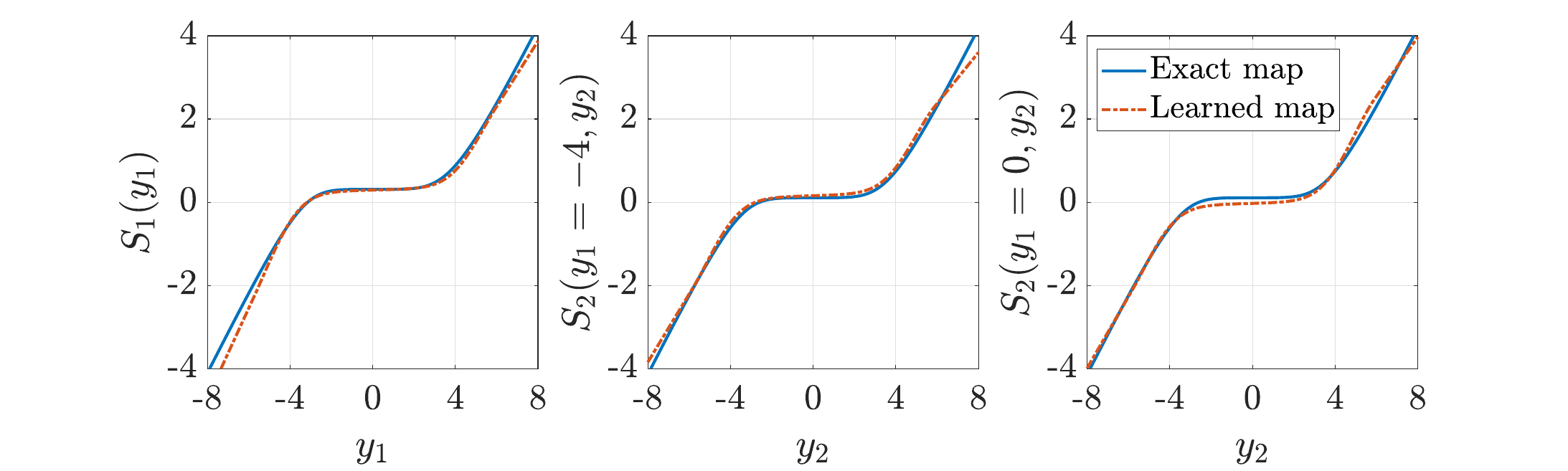}
  \caption{Comparison showing that a transport map can be well approximated but still lead to a problematic pushforward density. Here, $N = 200$. }
  \label{fig:multimodalmaps}
\end{figure}

\section{Outlook}\label{S:Conclusion}

\textcolor{black}{In this paper we studied the effects of transport maps on the unadjusted Langevin algorithm. We discussed how the properties of transport maps and target distributions affect the resulting TMULA algorithm. The general question of how to judiciously construct transport maps to optimize sampling performance is left for future work. We discuss many open and interesting questions that work towards maps for optimally improving sampling. }

{First, building on the results of Section~\ref{subsec:tmulaanalysis}, it would be useful to improve our theoretical understanding of how to characterize the transport map within a given approximate class (i.e., with finite expressivity) that maximizes the efficiency of TMULA sampling, and to develop practical training objectives for learning such a map.}
Second, in the numerical example of Section \ref{SS:MultimodalDistribution} we demonstrated the issues that approximate transport maps face in the presence of multimodality. It would be useful to pinpoint the essence of this challenge and to devise workarounds.
Third, a transport can sometimes produce a pushforward density with tails lighter than Gaussian, such that the resulting discretized Langevin process is transient. It would be desirable to understand this phenomenon better.

\appendix
\section{Proofs for Section \ref{subsec:connections}}
\label{app:connections}
\begin{proof}[Proof of Theorem \ref{prop:TMRMLD}]
  We first derive the SDE of the diffusion process $Z(t) = T(X(t))$. By It\^o's lemma \cite{oksendal2013stochastic}, the $k$--th component of $Z(t)$ can be written as

  \begin{align}
    \de Z(t) = \J_T \nabla_X \log \eta(S(Z(t))) \de t + c(Z(t)) \de t + \sqrt{2} \J_T \de W(t), \label{eq:ito}
  \end{align}
  where
   $ c_k(Z) = \sum_{i = 1}^d\frac{\partial^2 T_k}{\partial x_i^2} = \sum_{i = 1}^d\sum_{j = 1}^d \frac{\partial ^2 T_k}{\partial y_j \partial x_i} \cdot \frac{\partial T_j}{\partial x_i}.$
  The second equality is true by the multivariate chain rule and will be useful later.

  To construct the reversibly perturbed OLD on $\pi$, we apply the reversible perturbation to $\{Y(t)\}$ via the matrix $\B = (\J_S^{\top} \J_S)^{-1}$. After substituting $\pi(y) = \eta(S(y)) \det \J_S(y)$, we have
  \begin{align}
    \de Y(t) =(\J_S^{\top} \J_S)^{-1} \nabla_y \log \left[ \eta(S(Y(t)))\det(\J_S)\right] \de t + \nabla_y \cdot (\J_S^{\top} \J_S)^{-1} \de t + \sqrt{2} \J_S^{-1} \de W(t).
    \label{eq:rmld}
   \end{align}
   First note that the diffusion term is equivalent to that of Equation \eqref{eq:ito}, so we only need to compare the drift terms. Next, notice that 
\begin{align*}
  \frac{\partial}{\partial y_k} \log (\eta(S(y))) &= \sum_{i = 1}^d \frac{\partial S_i}{\partial y_k} \frac{\partial}{\partial x_i}\log(\eta(S(y))).
\end{align*}
Therefore, $\nabla_y \log \eta(S(y)) = \J_S^\top \nabla_x \log \eta(S(y))$ and \begin{align*} (\J_S^{\top} \J_S)^{-1} \nabla_y \log \eta(S(Y)) = \J_S^{-1} \nabla_x\log{\eta(S(y))}\end{align*}. This exactly matches the first part of the drift term in Equation \eqref{eq:ito}.

For the divergence term, first note the following identity. Let $\A$ and $\D$ be $d\times d$ matrix-valued functions. Then observe that
\begin{align*}
  (\nabla \cdot \A \D)_k &= \sum_{j= 1}^d \frac{\partial}{\partial y_j}(\A\D)_{kj} = \sum_{j = 1}^d \sum_{i = 1}^d \frac{\partial}{\partial y_j} \A_{ki}\D_{ij}\\
  & = \sum_{i = 1}^d \sum_{j = 1}^d \frac{\partial \A_{ki}}{\partial y_j} \D_{ij} + \A_{ki}\frac{\partial \D_{ij}}{\partial y_j}   = \sum_{i = 1}^d \sum_{j = 1}^d\left[ \frac{\partial \A_{ki}}{\partial y_j} \D_{ij}\right] + (\A \nabla \cdot \D)_k.
\end{align*}
Applying this identity to $\B= (\J_S^{\top} \J_S)^{-1} = \J_T\J_T^\top$, we obtain
\begin{align*}
  (\nabla_y \cdot \J_T \J_T^\top)_k &= \sum_{i,j = 1}^d \left[\frac{\partial (\J_T)_{ki}}{\partial y_j} (\J_T)^\top_{ij} \right] + (\J_T \nabla_y \cdot \J_T^\top)_k \\ &= \sum_{i,j = 1}^d \left[\frac{\partial^2 T_k}{\partial y_j\partial x_i} \cdot \frac{\partial T_j}{\partial x_i}\right] + (\J_T \nabla_y \cdot\J_T^\top)_k.
\end{align*}
Notice that the first term is exactly the It\^o correction term. Hence, we only need to show that $(\J_S^{\top} \J_S)^{-1}\nabla_y \log \det \J_S + \J_T \nabla_y \cdot \J_T^\top = 0$. To avoid issues with the chain rule relating $Y$ and $X$, it suffices to show that $\nabla_y \log\det \J_S + \J_S^\top \nabla_y \cdot (\J_S^\top)^{-1} = 0$.

For the first term, observe that
\begin{align*}
  \frac{\partial}{\partial y_k}\log \det \J_S  &= \sum_{ i,j = 1}^d  ((\mathbf{J}_S^\top)^{-1})_{ij} \left(\frac{\partial \mathbf{J}_S }{\partial y_k} \right)_{ij}  = \sum_{i,j = 1}^d \left(\frac{\partial^2 S_i}{\partial y_k \partial y_j} \right) ((\mathbf{J}_S^\top)^{-1})_{ij} \\
  & = \sum_{i,j = 1}^d \left(\frac{\partial \mathbf{J}_S }{\partial y_j} \right)_{ik} ((\mathbf{J}_S^\top)^{-1})_{ij}
\end{align*}

Letting $:j$ denote the $j$--th column vector, we see
\begin{align*}
  (\J_S^\top \nabla \cdot (\J_S^\top)^{-1})_k &= -\left(\J_S^\top \sum_{j = 1}^d \left[(\J_S^\top)^{-1} \frac{\partial \J_S^\top}{\partial y_j} (\J_S^\top)^{-1} \right]_{:j} \right)_k\\
  &= -\left(\sum_{j = 1}^d \left[\frac{\partial \J_S^\top}{\partial y_j} (\J_S^\top)^{-1} \right]_{:j} \right)_k \\
  & = - \sum_{j = 1}^d \left[\frac{\partial \J_S^\top}{\partial y_j} (\J_S^\top)^{-1} \right]_{kj} \\ & = -\sum_{j,i = 1}^d \left(\frac{\partial\J_S^\top}{\partial y_j} \right)_{ki} ((\J_S^\top)^{-1})_{ij}.
\end{align*}
which exactly the negative of the derivative of the log determinant of $\J_S$. We therefore conclude that Equation \eqref{eq:ito} exactly matches \eqref{eq:rmld}.
\end{proof}

\begin{proof}[Proof of Theorem \ref{prop:tmgiirr}]
  Apply It\^o's formula to get
  \begin{align*}
    \de Y_k(t) = \sum_{i = 1}^d \frac{\partial T_k}{\partial x_i }\hspace{-2pt}\left[\frac{\partial}{\partial x_i}\log \eta(X(t)) + \left(\mathbf{D}\nabla \log \eta(X(t)) \right)_i \right]\hspace{-2pt} \de t &+ \sum_{i= 1}^d \frac{\partial^2 T_k}{\partial x_i^2 } \de t \\&+ \sqrt{2}\sum_{i = 1}^d \frac{\partial T_k}{\partial x_i} \de W_i(t).
  \end{align*}
  The first, third, and fourth terms are identical to the ones in Theorem \ref{prop:TMRMLD} when irreversibility is not considered. Therefore, we only need to address the second term. Note that
  \begin{align*}
    \mathbf{J}_T\mathbf{D} \nabla_x \log \eta(x)  &= \J_T\mathbf{D}\mathbf{J}_T^\top \nabla_y \log \eta(S(y)) \\
    & = \mathbf{J}_T\mathbf{D}\mathbf{J}_T^\top \nabla_y \log \left( \eta(S(y)) \det \mathbf{J}_S(y) \right) - \mathbf{J}_T\mathbf{D}\mathbf{J}_T^\top\nabla_y \log \det \mathbf{J}_S(y) \\
    & = \mathbf{J}_T\mathbf{D}\J_T^\top \nabla_y \log \pi(y) - \mathbf{J}_T\mathbf{D}\mathbf{J}_T^\top \nabla_y\log \det \mathbf{J}_S(y).
  \end{align*}
  We only need to show that the last term is equal to $\nabla_y \cdot \mathbf{J}_T\mathbf{D}\mathbf{J}_T^\top$.  From the proof of the previous Theorem, first observe that
  \begin{align}
    (\nabla_y \cdot \mathbf{J}_T\mathbf{D}\mathbf{J}_T^\top)_k = \sum_{i,j = 1}^d \left[ \frac{\partial}{\partial y_j}(\mathbf{J}_T\mathbf{D})_{ki} \frac{\partial T_j}{\partial x_i}\right] + (\mathbf{J}_T\mathbf{D} \nabla_y \cdot \mathbf{J}_T^\top)_k \label{eq:identity}
  \end{align}
  and that $\mathbf{J}_T \mathbf{D} \nabla_y \cdot \mathbf{J}_T^\top = -\mathbf{J}_T \mathbf{D}\mathbf{J}_T^\top \nabla_y \log \det \mathbf{J}_S$. Therefore, we need to show that the first term on the right hand side in \eqref{eq:identity} above is identically zero. We compute
  \begin{align*}
    \sum_{i,j = 1}^d \left[ \frac{\partial}{\partial y_j}(\mathbf{J}_T\mathbf{D})_{ki} \frac{\partial T_j}{\partial x_i}\right] &= \sum_{i,j,l = 1}^d \left[ \mathbf{D}_{li}\frac{\partial^2 T_k}{\partial y_j\partial x_l} \frac{\partial T_j}{\partial x_i}\right]  = \sum_{i,l = 1}^d \left[\mathbf{D}_{li} \frac{\partial^2 T_k}{\partial x_i\partial x_l} \right] \\
    & = \sum_{i,l =1, i>l}^d \mathbf{D}_{li}\left[\frac{\partial^2 T_k}{\partial x_i\partial x_l}-\frac{\partial^2 T_k}{\partial x_l\partial x_i} \right] = 0.
  \end{align*}

\end{proof}

\section{Proofs for Section \ref{subsec:tmulaanalysis}}
\label{app:wasser}

The proof is straightforward: we use the same approach as taken in \cite{hsieh2018mirrored} and relate the convergence of ULA on $S_\sharp \pi$ to convergence on $\pi$ by relating the Wasserstein distance between $\eta^k$ and $\eta$ with that of $\pi^k$ and $\pi$ \cite{durmus2019high}.
\begin{lemma}
  Let $\eta^k$ and $\pi^k$ be the distribution of the discrete-time process $X^k$ and $Y^k$ at time step $k$, respectively.
  \begin{align}
    \mathcal{W}^2_2(\pi^k,\pi) \le \frac{1}{\rho^2} \mathcal{W}^2_2(\eta^k,\eta).
  \end{align}
\label{lemma:wasser}
\end{lemma}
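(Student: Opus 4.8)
The plan is to exploit the coupling characterization of the $2$--Wasserstein distance together with the deterministic relation $Y_k = T(X_k)$ between the two discrete-time chains. Recall that $\eta^k$ is the law of $X_k$ and $\pi^k$ is the law of $Y_k = T(X_k)$, so that $\pi^k = T_\sharp \eta^k$; moreover, by construction $\pi = T_\sharp \eta$ (since $\eta = S_\sharp\pi$ and $T = S^{-1}$). The key observation is that if $\gamma$ is any coupling of $\eta^k$ and $\eta$, then $(T\times T)_\sharp \gamma$ is a coupling of $\pi^k$ and $\pi$, because pushing each marginal of $\gamma$ through $T$ gives precisely $\pi^k$ and $\pi$ respectively. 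This lets us transfer any transport plan upstairs (on the $X$-side) to a transport plan downstairs (on the $Y$-side), at the cost of a factor controlled by the Lipschitz constant of $T$.

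The key steps, in order, are as follows. First I would recall from the remark following Assumption~\ref{ass:mapmonotone} that Assumption~\ref{ass:mapmonotone} — namely $\|S(z)-S(z')\|\ge \rho\|z-z'\|$ — implies that $T = S^{-1}$ is globally Lipschitz with constant $1/\rho$: indeed, writing $z = T(x)$, $z' = T(x')$, the monotonicity bound reads $\|x - x'\| \ge \rho\|T(x) - T(x')\|$, i.e. $\|T(x)-T(x')\| \le \rho^{-1}\|x-x'\|$. Second, I would fix an arbitrary coupling $\gamma$ of $\eta^k$ and $\eta$ and estimate
\begin{align*}
  \mathcal{W}_2^2(\pi^k,\pi) \le \int \|T(x) - T(x')\|^2 \, \de\gamma(x,x') \le \frac{1}{\rho^2} \int \|x - x'\|^2 \, \de\gamma(x,x'),
\end{align*}
where the first inequality uses that $(T\times T)_\sharp\gamma$ is an admissible coupling of $\pi^k$ and $\pi$ and the second uses the Lipschitz bound pointwise under the integral. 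Third, I would take the infimum over all couplings $\gamma$ on the right-hand side, which yields $\mathcal{W}_2^2(\pi^k,\pi) \le \rho^{-2}\,\mathcal{W}_2^2(\eta^k,\eta)$, as claimed. Alternatively, one can phrase the same argument with the map formulation of $\mathcal{W}_2$ already recalled in the excerpt: if $W_\sharp \eta^k = \eta$ attains (or nearly attains) the infimum, then $T\circ W\circ T^{-1}$ pushes $\pi^k$ to $\pi$, and $\|x - T(W(T^{-1}(x)))\| \le \rho^{-1}\|T^{-1}(x) - W(T^{-1}(x))\|$ after the change of variables $x \mapsto T^{-1}(x)$ under the integral against $\pi^k = T_\sharp\eta^k$.

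There is no real obstacle here; the statement is essentially a change-of-variables bookkeeping lemma, and the only thing to be careful about is making sure the coupling/pushforward manipulations are valid, which they are since $T$ is a measurable bijection with $T = S^{-1}$ globally defined on $\R^d$. If I had to name the most delicate point, it is simply confirming that $(T\times T)_\sharp\gamma$ genuinely has the right marginals — this is immediate from the definition of pushforward and the fact that the marginals of $\gamma$ are $\eta^k$ and $\eta$ — and that the infimum defining $\mathcal{W}_2$ is attained or can be approached, so that the bound passes to the limit; both are standard. Once Lemma~\ref{lemma:wasser} is in hand, Proposition~\ref{prop:wasser} follows by substituting the bound of Theorem~\ref{thm:ula} (applied to $U = -\log\eta$, which is $m$-strongly convex with $L$-Lipschitz gradient by Assumption~\ref{ass:potentialass}) for $\mathcal{W}_2^2(\eta^k,\eta)$.
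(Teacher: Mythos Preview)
Your proof is correct and uses essentially the same idea as the paper: both exploit that $T=S^{-1}$ is $1/\rho$-Lipschitz (equivalently, $S$ is $\rho$-monotone) to contract the Wasserstein distance. The only cosmetic difference is that the paper works directly with the Monge (map) formulation of $\mathcal{W}_2$ that you describe as your ``alternative'' phrasing, whereas your primary argument uses the coupling formulation; the content is identical.
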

\begin{proof}
  We use the fact that $S$ is $\rho$-strongly monotone:
  \begin{align*}
    \mathcal{W}_2^2(\eta^k,\eta) = \mathcal{W}_2^2(S_\sharp \pi^k,S_\sharp \pi) &= \inf_W \int \|x - W(x) \|^2 \de S_\sharp \pi^k(x)  \\
    & = \inf_W \int \|S(x) - W\circ S(x) \|^2 \de \pi^k(x) \\
    &\ge \rho^2 \inf_W \int \|x - S^{-1} \circ W \circ S(x) \|^2 \de \pi^k(x).
  \end{align*}
  Since $W$ is such that $W_\sharp S_\sharp \pi^k = S_\sharp \pi$, this implies that $S_\sharp^{-1} W_\sharp S_\sharp \pi^k = \pi$, and we therefore have
  \begin{align*}
    \mathcal{W}_2^2(\pi^k,\pi) \le \frac{1}{\rho^2}\mathcal{W}_2^2(\eta^k,\eta).
  \end{align*}
\end{proof}

\begin{proof}[Proof of Theorem \ref{prop:wasser}]
 We apply Theorem 5 of \cite{durmus2019high} to ULA on $\eta$, and then use Lemma \ref{lemma:wasser} to derive a bound for $\pi$. Observe that

  \begin{align*}
    \mathcal{W}_2^2(\pi^k,\pi) &\le \frac{1}{\rho^2} \mathcal{W}_2^2(\eta^k,\eta) \le  \frac{1}{\rho^2}  \left[ \left(1- \frac{\kappa h}{2}\right)^k \left(2\|y-y^\star\|^2 + \frac{2d}{m} -C \right) + C \right],
  \end{align*}
  where
   $ C = \frac{2L^2 d}{\kappa }[h(\kappa^{-1} + h)] \left(2+ \frac{L^2h}{m} + \frac{L^2h^2}{6} \right).$
\end{proof}

\section{Proofs for Section \ref{subsec:tmrmld}}
\label{app:tmulaem}

\begin{proof}[Proof of Theorem \ref{prop:tmuladisc}]
    Consider the $i$--th component of $Z_{k+1}$ and its Taylor expansion around the value $Z_k^{(i)} = T_i(X_k)$. Ignoring terms of order $h^{3/2}$ and higher, we have
    \begin{align*}
        Z^{(i)}_{k+1} =& T_i\left( X_k + h \nabla_x \log \eta(X_k) + \sqrt{2h} \xi_{k+1}\right) \\
         = & T_i(X_k) + (\nabla_x T_i(X_k) )^\top\left(h \nabla_x \log \eta(X_k) + \sqrt{2h} \xi_{k+1} \right)\\& + \frac{1}{2}\left(h \nabla_x \log \eta(X_k) + \sqrt{2h} \xi_{k+1} \right)^\top \nabla_x^2 T_i \left(h \nabla_x \log \eta(X_k) + \sqrt{2h} \xi_{k+1} \right) \\ &+ \mathcal{O}(h^{3/2}) \\
         =& Z_k ^{(i)} + h(\nabla_x T_i)^\top \left(\mathbf{J}_T^\top \nabla_y \log \pi(T(X_k)) + \mathbf{J}_T^\top\nabla_y \log \det \mathbf{J}_T\right) + \sqrt{2h} (\mathbf{J}_T \xi_{k+1})^{(i)} \\ & + h {\xi_{k+1}}^\top \nabla_x^2 T_i \xi_{k+1} + \mathcal{O}(h^{3/2})\\
         =& Z_k^{(i)} + h( \mathbf{J}_T\mathbf{J}_T^\top\nabla_y\log \pi(Z_k) )^{(i)}+ h(\mathbf{J}_T\mathbf{J}_T^\top \nabla_y \log \det \mathbf{J}_T)^{(i)} + h\sum_{j = 1}^d \frac{\partial^2 T_i}{\partial x_j^2} \\ &+  \sqrt{2h} (\mathbf{J}_T \xi^m)_k+hN_k^{(i)} + \mathcal{O}(h^{3/2}),
    \end{align*}
  where $N_k^{(i)} = \xi_{k+1}^\top \nabla_x^2 T_i\xi_{k+1} -  \sum_{j = 1}^d \frac{\partial^2T_i}{\partial x_j^2} $. Note that $N_k^{(i)}$ is a mean zero non-Gaussian random variable. (In fact, it is a sum of scaled and centered $\chi^2$ random variables). Recall that $\xi_{k+1} \sim\mathcal{N}(0,\mathbf{I}_d)$, so $\Ex[\xi_{k+1}^{(j)} \xi_{k+1}^{(l)}] = \delta_{jl}$ and
  \begin{align*}
    \Ex\left[\xi_{k+1}^\top \nabla_x^2 T_i\xi_{k+1} \right]-  \sum_{j = 1}^d \frac{\partial^2T_i}{\partial x_j^2}   =  \sum_{j=1}^d \sum_{l = 1}^d \Ex\left[\xi_{k+1}^{(j)}\xi_{k+1}^{(l)} \right] \frac{\partial^2 T_i}{\partial x_j\partial x_l} - \sum_{j = 1}^d \frac{\partial^2T_i}{\partial x_j^2} =0.
  \end{align*}
    In the proof of Theorem \ref{prop:TMRMLD}, we showed
        $(\nabla_y\cdot\mathbf{J}_T\mathbf{J}_T^\top)^{(i)} = \sum_{j = 1}^d \frac{\partial^2 T_i}{\partial x_j^2} + \mathbf{J}_T\mathbf{J}_T^\top\nabla_y \log \det \mathbf{J}_T,$
and therefore,
    \begin{align*}
        Z_{k+1}^{(i)} = Z_k^{(i)} + h( \mathbf{J}_T\mathbf{J}_T^\top\nabla_y\log \pi(Z_k) )^{(i)}+ h(\nabla_y \cdot \mathbf{J}_T\mathbf{J}_T^\top)^{(i)}&+  \sqrt{2h} (\mathbf{J}_T \xi_{k+1})^{(i)} \\ &+ N_k^{(i)} + \mathcal{O}(h^{3/2}).
    \end{align*}
    After dropping the non-Gaussian and higher order terms, and substituting $\mathbf{J}_S^{-1}$ for $\mathbf{J}_T$, we arrive at the desired result.

As for the mean-squared error,  note that $\Ex\left[\left\|N_k\right\|^2\right] = \sum_{i = 1}^d \Ex\left[\left|N_k^{(i)}\right|^2\right]$, so we compute
\begin{align*}
    \Ex\left[\left|N_k^{(i)}\right|^2 \right] &= \Ex\left[\hspace{-3pt}\left(\xi_{k+1}^\top \nabla^2 T_i \xi_{k+1}  -  \sum_{j = 1}^d \frac{\partial^2 T_i}{\partial x_j^2}\right)^2 \right]\\ &= \Ex\left[\hspace{-3pt}\left(\sum_{jl} \frac{\partial^2T_i}{\partial x_j\partial x_l}\xi_{k+1}^{(j)}\xi_{k+1}^{(l)} - \sum_{j = 1}^d \frac{\partial^2 T_i}{\partial x_j^2}\right)^2 \right].\nonumber
\end{align*}
Expanding the expression in the expectation, we have
\begin{align*}
    &\left( \sum_{jl} \frac{\partial^2T_i}{\partial x_j\partial x_l}\xi_{k+1}^{(j)}\xi_{k+1}^{(l)}\right)^2 \hspace{-5pt} + \left(\sum_{j = 1}^d \frac{\partial^2 T_i}{\partial x_j^2} \right)^2 \hspace{-5pt}- 2\left( \sum_{jl} \frac{\partial^2T_i}{\partial x_j\partial x_l}\xi_{k+1}^{(j)}\xi_{k+1}^{(l)}\right)\left(\sum_{i = 1}^d \frac{\partial^2 T_i}{\partial x_i^2}\right) \\
    = & \sum_{jlmn} \frac{\partial^2T_i}{\partial x_j\partial x_l} \frac{\partial^2T_i}{\partial x_m\partial x_n}\xi_{k+1}^{(j)}\xi_{k+1}^{(l)}\xi_{k+1}^{(m)}\xi_{k+1}^{(n)} + \sum_{jl} \frac{\partial^2 T_i}{\partial x_j^2}\frac{\partial^2 T_i}{\partial x_l^2} \\ & - 2\sum_{jlm} \frac{\partial^2T_i}{\partial x_j\partial x_l}\frac{\partial^2 T_i}{\partial x_m^2} \xi_{k+1}^{(j)}\xi_{k+1}^{(l)}.
\end{align*}
The expectation of the summands of the first term are only nonzero if pairs of indices are equal or if all the indices are equal. Using the fact that $\Ex\left[\hspace{-2pt}\left(\xi_{k+1}^{(j)} \right)^4\hspace{-1pt} \right] =3$ and $\Ex\left[\hspace{-1pt}\left(\xi_{k+1}^{(j)} \right)^2\hspace{-3pt}\left(\xi_{k+1}^{(l)}\right)^2 \hspace{-1pt}\right] = 1$,
\begin{align*}
  \Ex&\left[\sum_{jlmn} \frac{\partial^2T_i}{\partial x_j\partial x_l} \frac{\partial^2T_i}{\partial x_m\partial x_n}\xi_{k+1}^{(j)}\xi_{k+1}^{(l)}\xi_{k+1}^{(m)}\xi_{k+1}^{(n)} \right]\\ &= 3\sum_{j}\hspace{-3pt} \left(\frac{\partial^2 T_i}{\partial x_j^2} \right)^2 \hspace{-6pt} + \sum_{\substack{jl\\j\neq l}} \frac{\partial^2 T_i}{\partial x_j^2} \frac{\partial^2T_i}{\partial x_l^2} + \sum_{\substack{jl\\j\neq l}}  \left( \frac{\partial^2 T_i}{\partial x_j\partial x_l}\right)^2\hspace{-3pt} \\
  & = \sum_{j}\hspace{-3pt} \left(\frac{\partial^2 T_i}{\partial x_j^2} \right)^2 \hspace{-6pt} + \sum_{\substack{jl}}  \left[\frac{\partial^2 T_i}{\partial x_j^2} \frac{\partial^2T_i}{\partial x_l^2} +  \left( \frac{\partial^2 T_i}{\partial x_j\partial x_l}\right)^2 \right]
\end{align*}
For the third term, we have
 $ \Ex\left[- 2\sum_{jlm} \frac{\partial^2T_i}{\partial x_j\partial x_l}\frac{\partial^2 T_i}{\partial x_m^2} \xi_{k+1}^{(j)}\xi_{k+1}^{(l)}\right] = - 2\sum_{j,l} \frac{\partial^2 T_i}{\partial x_j^2} \frac{\partial^2 T_i}{\partial x_l^2}.$
Summing terms together, we obtain
\begin{align*}
  \Ex\left[\left|N_k^{(i)}\right|^2 \right] = \sum_{j = 1}^d \left(\frac{\partial^2 T_i}{\partial x_j^2} \right)^2 + \sum_{j = 1}^d \sum_{l = 1}^d \left( \frac{\partial^2 T_i}{\partial x_j\partial x_l}\right)^2.
\end{align*}
To obtain the desired result, we sum over the above expression for all $i = 1,\ldots d$.

\end{proof}

\begin{proof}[Proof of Lemma \ref{L:LogSobolev}]
The one-dimensional statement of this lemma is given without details in Proposition 5.4.3 of \cite{BakryBook}. Here we provide for completeness the details for the general multidimensional case. Since $\eta$ satisfies $\text{LSI}(\alpha)$, a change of variables leads to
\begin{align}
\textrm{Ent}_{\pi}[f]&=\textrm{Ent}_{\eta}[f\circ T]\leq \frac{1}{2\alpha}\int_{\mathbb{R}^{d}}\frac{\|\nabla \left(f\circ T\right)(x)\|^{2}}{\left(f\circ T\right)(x)}\eta(x)dx\nonumber\\
&= \frac{1}{2\alpha}\int_{\mathbb{R}^{d}}\frac{\|\J^{\top}_{T}(x)\nabla  f(T(x))\|^{2}}{f(T(x))}\pi(T(x))\det \J_{T}(x)dx\nonumber\\
&= \frac{1}{2\alpha}\int_{T^{-1}(\mathbb{R}^{d})}\frac{\|\J^{\top}_{T}(T^{-1}(y))\nabla f(y)\|^{2}}{ f(y)}\pi(y)\left[\det \J_{T}(T^{-1}(y)) \det \J_{T^{-1}}(y)\right]dy.\nonumber
\end{align}

Since, we shall have that  $\det \J_{T}(T^{-1}(y)) \det \J_{T^{-1}}(y)=1$ and due to $T^{-1}(\mathbb{R}^{d})=\mathbb{R}^{d}$:
\begin{align}
\textrm{Ent}_{\pi}[f]
&\leq \frac{1}{2\alpha}\int_{\mathbb{R}^{d}}\frac{\|\J^{\top}_{T}(T^{-1}(y))\nabla f(y)\|^{2}}{ f(y)}\pi(y)dy.\nonumber
\end{align}

Notice that the Cauchy-Schwarz inequality gives
\begin{align}
\|\J^{\top}_{T}(T^{-1}(y))\nabla f(y)\|^{2}&=\sum_{i=1}^{d}\left|\sum_{k=1}^{d}\frac{\partial T_{k}}{\partial x_{i}}(x)\Big{|}_{x=T^{-1}(y)} \partial_{y_{k}}f(y)\right|^{2} \\
&\leq \|\J^{\top}_{T}(T^{-1}(y))\|_{2}^{2} \left\|\nabla f(y)\right\|^{2}. \nonumber\end{align}

Since $\sup_{y\in\mathbb{R}^{d}}\|\J^{\top}_{T}(T^{-1}(y))\|_{2}^{2}\leq \frac{1}{\rho^{2}}$, we shall have
\begin{align}
\textrm{Ent}_{\pi}[f]
\leq \frac{1}{2\alpha\rho^{2}}\int_{\mathbb{R}^{d}}\frac{\left\|\nabla f(y)\right\|^{2}}{ f(y)}\pi(y)dy
\nonumber
\end{align}
which means that $\pi$ satisfies $\text{LSI}(\alpha \rho^{2})$, completing the proof of the lemma.
\end{proof}

\medskip


\begin{thebibliography}{10}

  \bibitem{girolami2011riemann}
  Mark Girolami and Ben Calderhead.
  \newblock Riemann manifold {L}angevin and {H}amiltonian {M}onte {C}arlo
    methods.
  \newblock {\em Journal of the Royal Statistical Society: Series B (Statistical
    Methodology)}, 73(2):123--214, 2011.

  \bibitem{hwang1993accelerating}
  Chii-Ruey Hwang, Shu-Yin Hwang-Ma, and Shuenn-Jyi Sheu.
  \newblock Accelerating {G}aussian diffusions.
  \newblock {\em The Annals of Applied Probability}, pages 897--913, 1993.

  \bibitem{hwang2005accelerating}
  Chii-Ruey Hwang, Shu-Yin Hwang-Ma, and Shuenn-Jyi Sheu.
  \newblock Accelerating diffusions.
  \newblock {\em Annals of Applied Probability}, 15(2):1433--1444, 2005.

  \bibitem{lelievre2013optimal}
  Tony Lelievre, Francis Nier, and Grigorios~A Pavliotis.
  \newblock Optimal non-reversible linear drift for the convergence to
    equilibrium of a diffusion.
  \newblock {\em Journal of Statistical Physics}, 152(2):237--274, 2013.

  \bibitem{rey2015irreversible}
  Luc Rey-Bellet and Konstantinos Spiliopoulos.
  \newblock Irreversible {L}angevin samplers and variance reduction: a large
    deviations approach.
  \newblock {\em Nonlinearity}, 28(7):2081, 2015.

  \bibitem{rey2015variance}
  Luc Rey-Bellet and Konstantinos Spiliopoulos.
  \newblock Variance reduction for irreversible {L}angevin samplers and diffusion
    on graphs.
  \newblock {\em Electronic Communications in Probability}, 20, 2015.

  \bibitem{rey2016improving}
  Luc Rey-Bellet and Konstantinos Spiliopoulos.
  \newblock Improving the convergence of reversible samplers.
  \newblock {\em Journal of Statistical Physics}, 164(3):472--494, 2016.

  \bibitem{zhang2022geometry}
  Benjamin~J Zhang, Youssef~M Marzouk, and Konstantinos Spiliopoulos.
  \newblock Geometry-informed irreversible perturbations for accelerated
    convergence of {L}angevin dynamics.
  \newblock {\em Statistics and Computing}, 32(5):1--22, 2022.

  \bibitem{cui2024optimal}
  Tiangang Cui, Xin Tong, and Olivier Zahm.
  \newblock Optimal riemannian metric for poincar $\{$$\backslash$'e$\}$
    inequalities and how to ideally precondition langevin dymanics.
  \newblock {\em arXiv preprint arXiv:2404.02554}, 2024.

  \bibitem{el2012bayesian}
  Tarek~A El~Moselhy and Youssef~M Marzouk.
  \newblock Bayesian inference with optimal maps.
  \newblock {\em Journal of Computational Physics}, 231(23):7815--7850, 2012.

  \bibitem{brennan2020greedy}
  Michael Brennan, Daniele Bigoni, Olivier Zahm, Alessio Spantini, and Youssef
    Marzouk.
  \newblock Greedy inference with structure-exploiting lazy maps.
  \newblock {\em Advances in Neural Information Processing Systems},
    33:8330--8342, 2020.

  \bibitem{kobyzev2020normalizing}
  Ivan Kobyzev, Simon~JD Prince, and Marcus~A Brubaker.
  \newblock Normalizing flows: {A}n introduction and review of current methods.
  \newblock {\em IEEE Transactions on Pattern Analysis and Machine Intelligence},
    43(11):3964--3979, 2020.

  \bibitem{marzouk2016introduction}
  Youssef Marzouk, Tarek Moselhy, Matthew Parno, and Alessio Spantini.
  \newblock {\em Sampling via Measure Transport: An Introduction}, pages 1--41.
  \newblock Springer International Publishing, Cham, 2016.

  \bibitem{peherstorfer2019transport}
  Benjamin Peherstorfer and Youssef Marzouk.
  \newblock A transport-based multifidelity preconditioner for {M}arkov chain
    {M}onte {C}arlo.
  \newblock {\em Advances in Computational Mathematics}, 45(5):2321--2348, 2019.

  \bibitem{rezende2015variational}
  Danilo Rezende and Shakir Mohamed.
  \newblock Variational inference with normalizing flows.
  \newblock In {\em International conference on machine learning}, pages
    1530--1538. PMLR, 2015.

  \bibitem{papamakarios2021normalizing}
  George Papamakarios, Eric~T Nalisnick, Danilo~Jimenez Rezende, Shakir Mohamed,
    and Balaji Lakshminarayanan.
  \newblock Normalizing flows for probabilistic modeling and inference.
  \newblock {\em J. Mach. Learn. Res.}, 22(57):1--64, 2021.

  \bibitem{tabak2010density}
  Esteban~G Tabak and Eric Vanden-Eijnden.
  \newblock Density estimation by dual ascent of the log-likelihood.
  \newblock {\em Communications in Mathematical Sciences}, 8(1):217--233, 2010.

  \bibitem{hsieh2018mirrored}
  Ya-Ping Hsieh, Ali Kavis, Paul Rolland, and Volkan Cevher.
  \newblock Mirrored {L}angevin dynamics.
  \newblock {\em Advances in Neural Information Processing Systems}, 31, 2018.

  \bibitem{chewi2020exponential}
  Sinho Chewi, Thibaut Le~Gouic, Chen Lu, Tyler Maunu, Philippe Rigollet, and
    Austin Stromme.
  \newblock Exponential ergodicity of mirror-{L}angevin diffusions.
  \newblock {\em Advances in Neural Information Processing Systems},
    33:19573--19585, 2020.

  \bibitem{zhang2020wasserstein}
  Kelvin~Shuangjian Zhang, Gabriel Peyr{\'e}, Jalal Fadili, and Marcelo Pereyra.
  \newblock Wasserstein control of mirror {L}angevin {M}onte {C}arlo.
  \newblock In {\em Conference on Learning Theory}, pages 3814--3841. PMLR, 2020.

  \bibitem{he2022heavy}
  Ye~He, Krishnakumar Balasubramanian, and Murat~A Erdogdu.
  \newblock Heavy-tailed sampling via transformed unadjusted {L}angevin
    algorithm.
  \newblock {\em arXiv preprint arXiv:2201.08349}, 2022.

  \bibitem{johnson2012variable}
  Leif~T Johnson and Charles~J Geyer.
  \newblock Variable transformation to obtain geometric ergodicity in the
    random-walk {M}etropolis algorithm.
  \newblock {\em The Annals of Statistics}, pages 3050--3076, 2012.

  \bibitem{kohler2020equivariant}
  Jonas K{\"o}hler, Leon Klein, and Frank No{\'e}.
  \newblock Equivariant flows: exact likelihood generative learning for symmetric
    densities.
  \newblock In {\em International conference on machine learning}, pages
    5361--5370. PMLR, 2020.

  \bibitem{baptista2020adaptive}
  Ricardo Baptista, Olivier Zahm, and Youssef Marzouk.
  \newblock On the representation and learning of monotone triangular transport
    maps.
  \newblock {\em arXiv preprint arXiv:2009.10303v2}, 2022.

  \bibitem{onken2021ot}
  Derek Onken, Samy~Wu Fung, Xingjian Li, and Lars Ruthotto.
  \newblock {OT}-flow: Fast and accurate continuous normalizing flows via optimal
    transport.
  \newblock In {\em Proceedings of the AAAI Conference on Artificial
    Intelligence}, volume~35, pages 9223--9232, 2021.

  \bibitem{parno2018transport}
  Matthew~D Parno and Youssef~M Marzouk.
  \newblock Transport map accelerated {M}arkov chain {M}onte {C}arlo.
  \newblock {\em SIAM/ASA Journal on Uncertainty Quantification}, 6(2):645--682,
    2018.

  \bibitem{gabrie2022adaptive}
  Marylou Gabri{\'e}, Grant~M Rotskoff, and Eric Vanden-Eijnden.
  \newblock Adaptive {M}onte {C}arlo augmented with normalizing flows.
  \newblock {\em Proceedings of the National Academy of Sciences},
    119(10):e2109420119, 2022.

  \bibitem{carlier2010knothe}
  Guillaume Carlier, Alfred Galichon, and Filippo Santambrogio.
  \newblock From {K}nothe's transport to {B}renier's map and a continuation
    method for optimal transport.
  \newblock {\em SIAM Journal on Mathematical Analysis}, 41(6):2554--2576, 2010.

  \bibitem{baptista2022atm}
  Ricardo Baptista.
  \newblock Adaptive transport maps.
  \newblock \url{https://github.com/baptistar/ATM}, 2022.

  \bibitem{GorhamKSD}
  Jackson Gorham and Lester Mackey.
  \newblock Measuring sample quality with kernels.
  \newblock In {\em International Conference on Machine Learning}, pages
    1292--1301. PLMR, 2017.

  \bibitem{casella2011stability}
  Bruno Casella, Gareth Roberts, and Osnat Stramer.
  \newblock Stability of partially implicit {L}angevin schemes and their {MCMC}
    variants.
  \newblock {\em Methodology and Computing in Applied Probability}, 13:835--854,
    2011.

  \bibitem{pavliotis2014stochastic}
  Grigorios~A Pavliotis.
  \newblock {\em Stochastic processes and applications: diffusion processes, the
    {F}okker-{P}lanck and {L}angevin equations}, volume~60.
  \newblock Springer, 2014.

  \bibitem{kloeden1992stochastic}
  Peter~E Kloeden and Eckhard Platen.
  \newblock {\em Numerical Solution of Stochastic Differential Equations}.
  \newblock Springer, 1992.

  \bibitem{durmus2019high}
  Alain Durmus and Eric Moulines.
  \newblock High-dimensional {B}ayesian inference via the unadjusted {L}angevin
    algorithm.
  \newblock {\em Bernoulli}, 25(4A):2854--2882, 2019.

  \bibitem{roberts1996exponential}
  Gareth~O Roberts and Richard~L. Tweedie.
  \newblock Exponential convergence of {L}angevin distributions and their
    discrete approximations.
  \newblock {\em Bernoulli}, 2(4):341--363, 1996.

  \bibitem{durmus2017nonasymptotic}
  Alain Durmus and Eric Moulines.
  \newblock Nonasymptotic convergence analysis for the unadjusted {L}angevin
    algorithm.
  \newblock {\em The Annals of Applied Probability}, 27(3):1551--1587, 2017.

  \bibitem{chewi2021analysis}
  Sinho Chewi, Murat~A Erdogdu, Mufan~Bill Li, Ruoqi Shen, and Matthew Zhang.
  \newblock Analysis of {L}angevin {M}onte {C}arlo from {P}oincar{\'e} to
    log-{S}obolev.
  \newblock {\em arXiv preprint arXiv:2112.12662}, 2021.

  \bibitem{erdogdu2021convergence}
  Murat~A Erdogdu and Rasa Hosseinzadeh.
  \newblock On the convergence of {L}angevin {M}onte {C}arlo: The interplay
    between tail growth and smoothness.
  \newblock In {\em Conference on Learning Theory}, pages 1776--1822. PMLR, 2021.

  \bibitem{vempala2019rapid}
  Santosh Vempala and Andre Wibisono.
  \newblock Rapid convergence of the unadjusted {L}angevin algorithm:
    Isoperimetry suffices.
  \newblock {\em Advances in neural information processing systems}, 32, 2019.

  \bibitem{livingstone2014information}
  Samuel Livingstone and Mark Girolami.
  \newblock Information-geometric {M}arkov chain {M}onte {C}arlo methods using
    diffusions.
  \newblock {\em Entropy}, 16(6):3074--3102, 2014.

  \bibitem{xifara2014langevin}
  Tatiana Xifara, Chris Sherlock, Samuel Livingstone, Simon Byrne, and Mark
    Girolami.
  \newblock {L}angevin diffusions and the {M}etropolis-adjusted {L}angevin
    algorithm.
  \newblock {\em Statistics \& Probability Letters}, 91:14--19, 2014.

  \bibitem{trillos2020bayesian}
  Nicolas~Garcia Trillos and Daniel Sanz-Alonso.
  \newblock The {B}ayesian update: variational formulations and gradient flows.
  \newblock {\em Bayesian Analysis}, 15(1):29--56, 2020.

  \bibitem{ottobre2019optimal}
  Michela Ottobre, Natesh~S Pillai, and Konstantinos Spiliopoulos.
  \newblock Optimal scaling of the {MALA} algorithm with irreversible proposals
    for gaussian targets.
  \newblock {\em Stochastics and Partial Differential Equations: Analysis and
    Computations}, pages 1--51, 2019.

  \bibitem{villani2009optimal}
  C{\'e}dric Villani.
  \newblock {\em Optimal transport: old and new}, volume 338.
  \newblock Springer, 2009.

  \bibitem{manole2021plugin}
  Tudor Manole, Sivaraman Balakrishnan, Jonathan Niles-Weed, and Larry Wasserman.
  \newblock Plugin estimation of smooth optimal transport maps.
  \newblock {\em arXiv preprint arXiv:2107.12364}, 2021.

  \bibitem{tabak2013family}
  Esteban~G Tabak and Cristina~V Turner.
  \newblock A family of nonparametric density estimation algorithms.
  \newblock {\em Communications on Pure and Applied Mathematics}, 66(2):145--164,
    2013.

  \bibitem{rosenblatt1952remarks}
  Murray Rosenblatt.
  \newblock Remarks on a multivariate transformation.
  \newblock {\em The Annals of Mathematical Statistics}, 23(3):470--472, 1952.

  \bibitem{bogachev2005triangular}
  Vladimir~Igorevich Bogachev, Aleksandr~Viktorovich Kolesnikov, and
    Kirill~Vladimirovich Medvedev.
  \newblock Triangular transformations of measures.
  \newblock {\em Sbornik: Mathematics}, 196(3):309, 2005.

  \bibitem{spantini2018inference}
  Alessio Spantini, Daniele Bigoni, and Youssef Marzouk.
  \newblock Inference via low-dimensional couplings.
  \newblock {\em The Journal of Machine Learning Research}, 19(1):2639--2709,
    2018.

  \bibitem{martin2012stochastic}
  James Martin, Lucas~C Wilcox, Carsten Burstedde, and Omar Ghattas.
  \newblock A stochastic {N}ewton {MCMC} method for large-scale statistical
    inverse problems with application to seismic inversion.
  \newblock {\em SIAM Journal on Scientific Computing}, 34(3):A1460--A1487, 2012.

  \bibitem{abdulle2014high}
  Assyr Abdulle, Gilles Vilmart, and Konstantinos~C Zygalakis.
  \newblock High order numerical approximation of the invariant measure of
    ergodic {SDE}s.
  \newblock {\em SIAM Journal on Numerical Analysis}, 52(4):1600--1622, 2014.

  \bibitem{zygalakis2011existence}
  Konstantinos~C Zygalakis.
  \newblock On the existence and the applications of modified equations for
    stochastic differential equations.
  \newblock {\em SIAM Journal on Scientific Computing}, 33(1):102--130, 2011.

  \bibitem{BakryBook}
  Dominique Bakry, Ivan Gentil, and Michel Ledoux.
  \newblock {\em Analysis and geometry of Markov diffusion operators}.
  \newblock Springer International Publishing Switzerland, 2014.

  \bibitem{asmussen2007stochastic}
  S{\o}ren Asmussen and Peter~W Glynn.
  \newblock {\em Stochastic simulation: algorithms and analysis}, volume~57.
  \newblock Springer Science \& Business Media, 2007.

  \bibitem{pagani2022n}
  Filippo Pagani, Martin Wiegand, and Saralees Nadarajah.
  \newblock An n-dimensional {R}osenbrock distribution for {M}arkov chain {M}onte
    {C}arlo testing.
  \newblock {\em Scandinavian Journal of Statistics}, 49(2):657--680, 2022.

  \bibitem{mattingly2002ergodicity}
  Jonathan~C Mattingly, Andrew~M Stuart, and Desmond~J Higham.
  \newblock Ergodicity for {SDE}s and approximations: locally {L}ipschitz vector
    fields and degenerate noise.
  \newblock {\em Stochastic processes and their applications}, 101(2):185--232,
    2002.

  \bibitem{oksendal2013stochastic}
  Bernt Oksendal.
  \newblock {\em Stochastic differential equations: an introduction with
    applications}.
  \newblock Springer Science \& Business Media, 2013.

  \end{thebibliography}
\end{document}